\newlength{\xfigwd}
\newtheorem{lemma}{Lemma}
\newtheorem{remark}{Remark}
\newtheorem{proposition}{Proposition}
\DeclareMathOperator*{\argmax}{argmax}
\DeclareMathOperator*{\argmin}{argmin}
\newcommand{\FRF}{\mathbf{F}_{\mathrm{RF}}}
\newcommand{\FBB}{\mathbf{F}_{\mathrm{BB}}}
\newcommand{\fBBk}{\mathbf{f}_{\mathrm{BB},k}}
\newcommand{\figref}[1]{\figurename~\ref{#1}}
\begin{document}

\title{Joint Phase-Time Arrays: A Paradigm for Frequency-Dependent Analog Beamforming in 6G}
\author{Vishnu V. Ratnam, \IEEEmembership{Senior Member,~IEEE}, Jianhua Mo, \IEEEmembership{Senior Member,~IEEE}, Ahmad AlAmmouri, \\ \IEEEmembership{Member,~IEEE},  Boon L. Ng, \IEEEmembership{Member,~IEEE}, Jianzhong (Charlie) Zhang, \IEEEmembership{Fellow,~IEEE}, \\ Andreas F. Molisch, \IEEEmembership{Fellow,~IEEE}
\thanks{V. V. Ratnam (e-mail: vishnu.r@samsung.com, ratnamvishnuvardhan@gmail.com), J. Mo, A. AlAmmouri, B. L. Ng, and J. Zhang are with the Standards and Mobility Innovation Lab, Samsung Research America, Plano, Texas, USA. A. F. Molisch is with the Ming Hsieh Dept. of Electrical and Computer Engg., University of Southern California, Los Angeles, USA.}}



\markboth{This paper has been published in IEEE Access 2022, doi: 10.1109/ACCESS.2022.3190418}
{This paper is published in IEEE Access, doi: 10.1109/ACCESS.2022.3190418}

\maketitle

\begin{abstract}
Hybrid beamforming is an attractive solution to build cost-effective and energy-efficient transceivers for millimeter-wave and terahertz systems. However, conventional hybrid beamforming techniques rely on analog components that generate a frequency flat response such as phase-shifters and switches, which limits the flexibility of the achievable beam patterns. As a novel alternative, this paper proposes a new class of hybrid beamforming called Joint phase-time arrays (JPTA), that additionally use true-time delay elements in the analog beamforming to create frequency-dependent analog beams. Using as an example two important frequency-dependent beam behaviors, the numerous benefits of such flexibility are exemplified. 
Subsequently, the JPTA beamformer design problem to generate any desired beam behavior is formulated and near-optimal algorithms to the problem are proposed. Simulations show that the proposed algorithms can outperform heuristics solutions for JPTA beamformer update. Furthermore, it is shown that JPTA can achieve the two exemplified beam behaviors with one radio-frequency chain, while conventional hybrid beamforming requires the radio-frequency chains to scale with the number of antennas to achieve similar performance. Finally, a wide range of problems to further tap into the potential of JPTA are also listed as future directions.
\end{abstract}

\begin{IEEEkeywords}
millimeter-wave, terahertz, true time delay, beamforming, JPTA, beyond 5G, 6G
\end{IEEEkeywords}

\section{Introduction} \label{sec_intro}
Due to the rising demand for traffic, wireless systems are moving towards higher frequency of operation, such as millimeter-wave (mm-wave) and terahertz (THz) frequencies, where abundant spectrum is available \cite{Boccardi2014}. However, the higher frequencies also suffer from a high isotropic channel propagation loss, and therefore require a large antenna array to create sufficient beamforming gain to ensure sufficient link budget for operation. Thus, these high frequency systems are usually built with a massive antenna array at the transmitter (TX) and/or the receiver (RX) containing many individual antenna elements. At the operating bandwidths of these mm-wave and THz systems, the cost and power consumption of radio-frequency (RF) chain components such as analog-to-digital converters (ADCs) and/or digital-to-analog converters (DACs) also grows tremendously \cite{Murmann_ADC_compiled}. Thus fully digital transceiver implementations, where each antenna element of the massive array is fed by a dedicated RF chain, are very challenging. To keep the hardware cost and power consumption of such massive antenna arrays manageable, typically a \emph{hybrid beamforming} (HBF) architecture is adopted where the antenna array is fed with a much smaller number of RF chains via the use of analog hardware such as phase-shifters and switches \cite{Molisch_VarPhaseShift, Molisch_HP_mag, Heath2016, Alkhateeb_COMM14}. 

A plethora of HBF architectures have been proposed in literature that use as analog hardware: phase-shifters \cite{Sudarshan2006, Ayach_TWC14, Liu2014, Vishnu_jrnl1, Yu2016}, switches \cite{Molisch2004, Vishnu_Globecom, Rial2016}, a combination of phase-shifters and switches \cite{Molisch_VarPhaseShift, Vishnu_ICC2017, Ratnam_HBwS_jrnl}, Rotman lens \cite{Zeng2014, Gao2017} etc. A variety of ways of connecting the analog hardware to the antenna elements have also been explored such as fully-connected \cite{Sudarshan2006, Ayach_TWC14, Sohrabi2016}, partially-connected \cite{Shuangfeng2015}, dynamic architectures \cite{Park2017a, Ratnam_HBwS_jrnl} etc. In such architectures, a combination of beamforming in the digital-domain using RF chains, and in the analog-domain (using phase-shifters or switches) are used to create the overall desired beam-shape in the desired direction. One of the drawbacks of such conventional HBF architectures is that the analog phase-shifters and switches have a frequency-flat response \footnote{Strictly speaking, the beam squint phenomenon in the HBF is also frequency-dependent. However, the angular dispersion in the frequency domain due to the beam squint is relatively small and, more importantly, cannot be adaptively controlled which is the requirement considered in this paper.}, i.e., all components of the input signal frequency undergo a similar transformation after passing through them. This reduces the flexibility of the effective beam-shape that can be created using HBF, in comparison to fully-digital implementations where each antenna array is fed with a dedicated RF chain. For example, in HBF with 1 RF chain, all signal frequencies experience the same beam shape, and thus serving different locations on different frequency sub-bands with good beamforming gain is infeasible. The impact of this drawback is further exacerbated at the mm-wave and THz frequencies, where the frequency-dependent beam shapes are beneficial for many applications such as coverage extension, efficient uplink control channel realization, beam-tracking, initial access, frequency multiplexing of users etc, as shall be detailed in Section \ref{sec_des_beam_behave}.\footnote{Conventional HBF architectures can still realize frequency-dependent beam variation using digital beamforming. However this needs multiple RF chains (as shall be shown in Section \ref{sec_compare_hybrid_BF}), which increases cost and power consumption.} 

In this work, we propose a new class of hybrid architectures called \emph{Joint phase-time arrays} (JPTA) that uses, in addition to the aforementioned frequency-flat analog hardware, true-time delay (TTD) elements, to perform the analog beamforming. Unlike switches and phase-shifters, TTDs can realize a  frequency-dependent phase-shift, i.e., different components of the input signal frequency undergo different transformations after passing through them. This allows the effective analog beam shape created by JPTA to vary with the frequency of the signal, thus removing the draw-back of conventional HBF of only realizing frequency-flat analog beams. As shall be shown in Section \ref{sec_des_beam_behave} this can be beneficial for many applications. In recent times, significant progress has been made in the implementation of tunable TTDs \cite{Integrated_Hashemi2008, True_Rotman2016} that use either optical components \cite{Optically_Frigyes1995, RF_Xie2021} or CMOS components operating at base-band \cite{A_Jang2019, An_Ghaderi2019, A_Lin2021}, cm-wave \cite{True_Chu2013, A_Spoof2020} or mm-wave \cite{Silicon_Ma2015} frequencies. From a system perspective, the use of TTDs in wide-band multi-antenna systems to prevent beam-squinting effects by creating unidirectional, frequency-flat beams was explored in \cite{Delay_Jingbo2019, Delay_Dai2021, Dynamic_Yan2021, Optical_Liu2021}. The use of TTDs to reduce beam training overhead in multi-antenna systems by creating a \emph{rainbow} beam-pattern is considered in \cite{Wideband_Han2019, Design_Boljanovic2020, True_Zhou2021, Fast_Boljanovic2021, Compressive_Boljanovic2021}. Finally, the use of TTDs for low-latency channel access by creating a \emph{rainbow} beam-pattern is considered in \cite{THzPrism_Zhai2020, Rainbow_Li2022}, and the use of multiple RF chains to create multiple such rainbow beam-patterns is considered in \cite{SS_Zhai2021}. 

Although the aforementioned works have considered several different use cases of TTDs in multi-antenna systems, they focus on realizing a specific type of beam-pattern, e.g. a uni-directional beam-pattern or a rainbow beam-pattern, and use a specific TTD architecture. The use of TTDs to realize a new class of hybrid beamforming architectures, viz. JPTA, that can realize different desirable frequency-dependent beam variations has not been explored before. Furthermore, a generic architecture for connecting the TTDs to the antenna elements and the generic beamformer design problem have not been explored before.  The contributions of this paper are as follows:
\begin{itemize}
\item We propose a new type of architecture (JPTA) that can realize frequency-dependent analog beamforming.
\item We formulate the JPTA beamformer design problem to adapt the TTDs and phase-shifters to achieve the desired frequency-dependent analog beams, and propose an algorithm to find near-optimal solutions to this problem.
\item We pick two important desirable beam behaviors that have many use cases, and present good heuristic solutions to the beamformer design problem for these behaviors to serve as baseline.
\item We perform a detailed investigation on the ability of JPTA to achieve these beam behaviors under various system parameter settings. 
\item We also perform an investigation of how many RF chains are required by conventional HBF solutions to replicate similar beam behavior as JPTA with one RF chain, and compare the resulting hardware costs. 
\item Finally, we also enumerate other potential use-cases and future directions for the proposed JPTA architecture.
\end{itemize}
The organization of the paper is as follows: benefits of frequency-dependent beam variations are exemplified in Section \ref{sec_des_beam_behave}, the system model is discussed in Section \ref{sec_sys_model}; the optimal design of the analog hardware is formulated in \ref{sec_prob_formulate}; the proposed algorithm is described in Section \ref{sec_proposed_algo} and some baseline heuristics for comparison are proposed in Section \ref{sec_baseline_algos}; the efficacy of JPTA to achieve different beam behaviors under varying system parameters is discussed in Section \ref{sec_simulation_results} and the comparison to conventional HBF is listed in Section \ref{sec_compare_hybrid_BF}; the future directions are discussed in Section \ref{sec_future_dir} and the conclusions are summarized in Section \ref{sec_conclude}.

\textbf{Notation:} scalars are represented by light-case letters; vectors and matrices by bold-case letters; and sets by light-case calligraphic letters. Additionally, ${\mathrm{j}} = \sqrt{-1}$, ${\mathbf{A}}^{\mathrm{T}}$ is the transpose of a matrix $\mathbf{A}$, ${\mathbf{A}}^{*}$ is the element-wise complex conjugate of a matrix $\mathbf{A}$, ${\mathbf{A}}^{\dag}$ is the Hermitian transpose of a matrix $\mathbf{A}$ (i.e., ${\mathbf{A}}^{\dag} = {[{\mathbf{A}}^{\mathrm{T}}]}^{*}$).  The element in $i$-th row and $j$-th column of the matrix $\mathbf{A}$ is denoted by $\left[\mathbf{A}\right]_{i,j}$.
Furthermore, ${[\mathbf{a}]}_{m}$ represents the $m$-th element of a vector $\mathbf{a}$, $\|\mathbf{a} \|$ represents the $\ell_2$ norm of a vector $\mathbf{a}$, $e^{\mathbf{a}}$ represents the element-wise exponentiation of a vector $\mathbf{a}$, $\angle {\mathbf{a}}$ represents the vector of element-wise phase angles of a vector $\mathbf{a}$. Furthermore, 
$\mathrm{Re}\{\cdot\}$ and $\mathrm{Im}\{\cdot\}$ refer to the element-wise real and imaginary components, respectively, of a vector or scalar, $\mathbb{N}$ is the set of natural numbers and ${\mathrm{O}}(a)$ is the big-O notation implying that the parameter scales linearly with $a$. 

\section{Desired beam-behaviors and their applications} \label{sec_des_beam_behave}
In this section, we consider two important frequency-dependent beam-behaviors that have multiple applications for mm-wave and THz systems. Other potentially useful behaviors are discussed briefly later in Section \ref{sec_future_dir}. Note that a `similar effect' to any frequency-dependent beamforming can be realized with conventional frequency-flat beamforming by time-multiplexing over multiple frequency-flat beams. However, this time multiplexing can have a large overhead, can suffer from a larger latency and can also be spectrally-inefficient, as discussed in the example scenarios below. 

\subsection{Behavior 1} In this behavior, at any signal frequency $f$, the desired beam creates the maximum possible array-gain in one angular direction $\theta(f)$. As $f$ varies linearly over the system bandwidth, the angular direction $\theta(f)$ also sweeps linearly over a certain angular region $[\theta_0 - \Delta \theta/2, \theta_0 + \Delta \theta/2]$ as shown in Fig.~\ref{Fig_behave1}.  
This beam behavior can be useful at a BS to reduce the communication overhead, in scenarios where many users are uniformly distributed in an angular region and require simultaneous channel access for low-latency and low-data-rate service in uplink and/or downlink. An example is the physical uplink control channel (PUCCH) where all the users that require channel access, to send HARQ-ACK (hybrid automatic repeat request acknowledgement) and channel-state information (CSI) reports, can be served simultaneously on different portions of the bandwidth. By avoiding the need to time-multiplex beams to serve these users, this can significantly reduce the PUCCH overhead. Additionally, if this beam behavior is used for the physical downlink shared channel (PDSCH), the required CSI feedback from each user in the PUCCH is also reduced to that of one sub-band, further reducing PUCCH overhead. 
This beam behavior is also useful for uplink coverage extension -- a crucial requirement for mm-wave and THz systems. By allowing simultaneous PUCCH and physical uplink shared channel (PUSCH) access to multiple cell-edge users on different sub-bands, each user can transmit uplink signals for a longer time, thus accumulating more signal power to improve signal-to-noise ratio, leading to cell coverage extension \cite{XDD_samsung}. 
Finally, since such a behavior can enable a BS to maintain a large beamforming gain over a wide region on different sub-bands, it is beneficial in beam-tracking and maintaining good link reliability for a high velocity users, and also in significantly reducing the overhead of cell discovery and initial beam alignment, as considered in \cite{Wideband_Han2019, Fast_Boljanovic2021} where it is referred to as rainbow beamforming. 
\subsection{Behavior 2} In this behavior, again, at any signal frequency $f$, it is desirable for the beam to create the maximum possible array-gain in one angular direction $\theta(f)$. However, here $\theta(f) = \theta_1$ for the lower half of the signal bandwidth and another angle $\theta(f) = \theta_2$ for the upper half bandwidth, as shown in Fig.~\ref{Fig_behave2}. 
This behavior is useful in scenarios where the users are sparsely distributed in the angular domain, and we wish to provide service to multiple users simultaneously on different portions of the large available system bandwidth. Such a behavior is also useful, for example, when the BS supports a large system bandwidth covering two sub-bands, while each user device only supports operation on one sub-band at a time, thus requiring scheduling of different users on the lower and upper sub-bands. Other reasons may include location-specific bandwidth availability to mitigate interference to incumbents, exploiting multi-user diversity to counter frequency-dependent fading etc.
\begin{figure}[!htb]
\centering
\subfloat[Behavior 1]{\includegraphics[width= 0.45\textwidth]{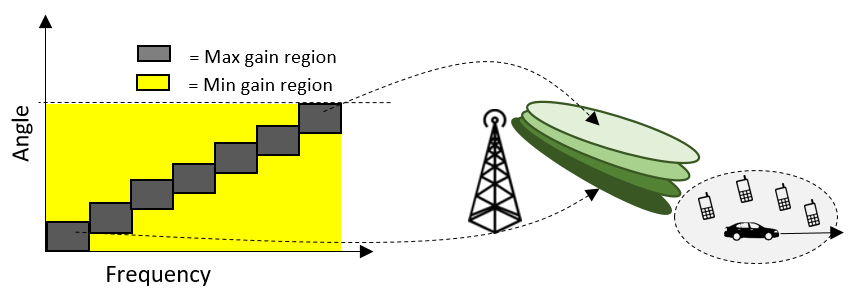} \label{Fig_behave1}} \\ 
\subfloat[Behavior 2]{\includegraphics[width= 0.45\textwidth]{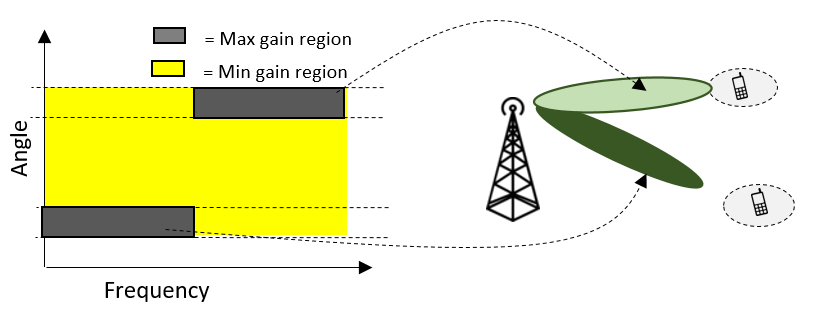} \label{Fig_behave2}}
\caption{An illustration of the desired frequency-dependent beam shapes and their applications.}
\label{Fig_desired_beam_behavior}
\end{figure}

\section{System model} \label{sec_sys_model}
We consider a layout with a single base-station (BS) serving many users in its coverage area and operating with a system bandwidth $W$ around a center frequency $f_0$. The BS is assumed to have a uniform linear antenna array having $M$ elements, and only $N_{RF}=1$ RF chain.\footnote{The presented results can be directly extended to planar array configurations. Extensions that consider multiple RF chains will be explored in future work}. The antenna spacing is half-wavelength at the center frequency $f_0$. Each of the $M$ antennas has a dedicated phase-shifter, and they are connected to the single RF chain via a network of $N \leq M$ TTDs as shown in Fig.~\ref{Fig_BS_architecture}. Here $\mathbf{P}$ is a fixed $M \times N$ mapping matrix, where each row $m$ has exactly one non-zero entry and determines which of the $N$ TTDs antenna $m$ is connected to. The TTDs are assumed to be configurable, with a delay variation range of $0 \leq \tau \leq \kappa/W$, where $\kappa$ is a design parameter to be selected. The phase-shifters are assumed to have unit magnitude and have arbitrarily reconfigurable phase $-\pi \leq \phi < \pi$. Transmission in both uplink and downlink directions is performed using OFDM with $K$ subcarriers indexed as $\mathcal{K} = \{\lfloor \frac{1-K}{2} \rfloor,..., \lfloor \frac{K-1}{2} \rfloor \}$. Then, the $M \times 1$ downlink TX signal on sub-carrier $k \in \mathcal{K}$ for a representative OFDM symbol can be expressed as:
\begin{align}
\mathbf{x}_k &= \underbrace{\frac{1}{\sqrt{M}} 
\begin{bmatrix}
    e^{{\mathrm{j}} \phi_{1}} & 0 & \hdots & 0 \\
    0 & e^{{\mathrm{j}} \phi_{2}} & \hdots & 0 \\
    \vdots & \vdots & \ddots & \vdots \\
    0 & 0 & \hdots & e^{{\mathrm{j}} \phi_{M}} 
  \end{bmatrix} }_{\mathbf{T}}  \mathbf{P}
\underbrace{\left[\begin{array}{c} e^{-{\mathrm{j}} 2 \pi f_k \tau_{1}} \\ e^{-{\mathrm{j}} 2 \pi f_k \tau_{2}} \\ \vdots \\ e^{-{\mathrm{j}} 2 \pi f_k \tau_{N}}\end{array} \right]}_{\mathbf{d}_k} \nonumber \\
& \quad \times \alpha_k s_k \nonumber \\
&= \mathbf{T} \mathbf{P} \mathbf{d}_{k} \alpha_k s_k, \label{eqn_tx_signal}
\end{align}
where $s_k$ and $\alpha_k$ are the scalar data and digital beamforming on the $k$-th subcarrier, $f_k$ is the frequency of the $k$-th sub-carrier (including the carrier frequency), $\tau_n$ is the delay of the $n$-th TTD and $\phi_m$ is the phase of the $m$ phase-shifter connected to the $m$-th antenna. Note that from \eqref{eqn_tx_signal} the total transmit power of the BS can be given by $P_{\mathrm{sum}} = \sum_{k \in \mathcal{K}} {|\alpha_k|}^2$. Note that for this JPTA architecture, the effective downlink unit-norm analog beamformer on sub-carrier $k$ is $\mathbf{T} \mathbf{P} \mathbf{d}_{k}$, where the $M \times M$ diagonal matrix $\mathbf{T}$ captures the effect of phase-shifters and the $N \times 1$ vector $\mathbf{d}_k$ captures the effect of TTDs. It can be shown that the same beamformer is also applicable at the BS for uplink scenario. 
For a target desired beam-behavior, we shall also assume knowledge of the desired $M \times 1$ beamforming vector as a function of sub-carrier index as: ${\boldsymbol{\mathcal{B}}} = \{ {\mathbf{b}}_k | k \in \mathcal{K} \}$, that satisfies $\sum_{k \in \mathcal{K}} {\|\mathbf{b}_k\|}^2 \leq P_{\mathrm{sum}}$. For example, for beam behavior 1 in Section \ref{sec_des_beam_behave}, this desired target beam is: ${\mathbf{b}}_k = \sqrt{P_{\mathrm{sum}}/MK} \mathbf{a}_k\big(\theta_0 + k \Delta \theta/K\big)$ and for beam behavior 2, we have ${\mathbf{b}}_k = \sqrt{P_{\mathrm{sum}}/MK} \mathbf{a}_k\big(\theta_1\big)$ for $k < 0$ and ${\mathbf{b}}_k = \sqrt{P_{\mathrm{sum}}/MK} \mathbf{a}_k\big(\theta_2\big)$ for $k \geq 0$, where $\mathbf{a}_k(\theta)$ is the array response vector at the BS for a given angle $\theta$ given by:
\begin{align}
\mathbf{a}_k(\theta) = {\left[\begin{array}{cccc}
1 & e^{{\mathrm{j}} \frac{\pi \sin(\theta)f_k}{f_0}} & \hdots & e^{{\mathrm{j}} \frac{(M-1) \pi \sin(\theta)f_k}{f_0}}
\end{array}\right]}^{\mathrm{T}}.
\end{align}
Note that here the array response vector $\mathbf{a}_k(\theta)$ is a function of $k$ since we do not ignore the beam-squint effects \cite{An_Ghaderi2019, Delay_Jingbo2019, Dynamic_Yan2021} which can be significant in wide-band systems. 
%
\begin{figure}[!htb]
\centering
\includegraphics[width= 0.45\textwidth]{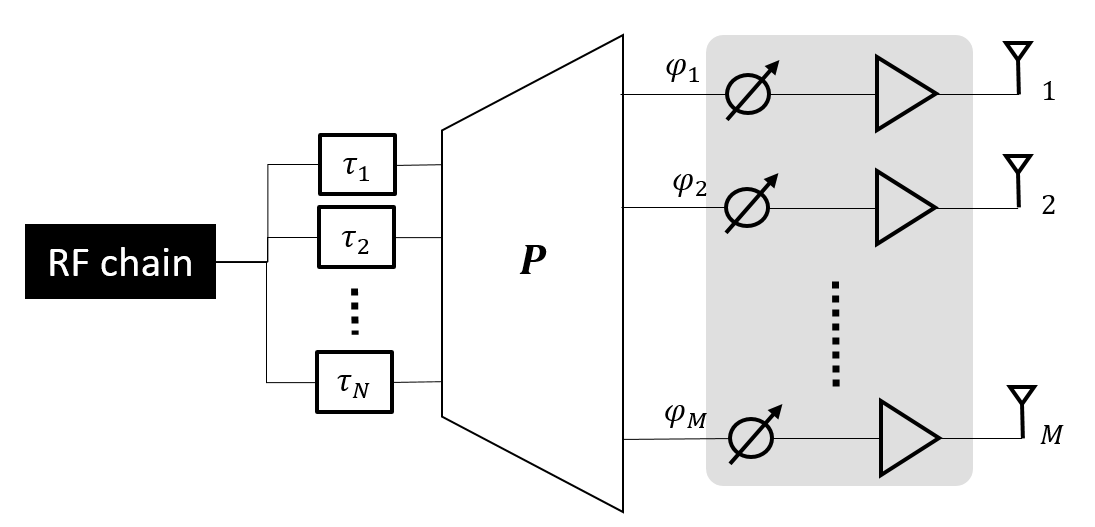}
\caption{An illustration of the BS JPTA architecture with $1$ RF chain and single phase-shifter per antenna element.}
\label{Fig_BS_architecture}
\end{figure}

\section{Problem formulation} \label{sec_prob_formulate}
For a given set of JPTA parameters $\boldsymbol{\alpha} \triangleq \{\alpha_k | k \in \mathcal{K}\}$, $\boldsymbol{\tau} \triangleq \{\tau_n | 1 \leq n \leq N \}$, $\boldsymbol{\phi} \triangleq \{\phi_m | 1 \leq m \leq M \}$, we quantify the matching to the desired beamformer $\mathcal{B}$ as:
\begin{align} \label{eqn_tildef_obj_defn}
\widetilde{\mathscr{F}}_{\mathrm{obj}} (\boldsymbol{\alpha}, \boldsymbol{\tau}, \boldsymbol{\phi}, \boldsymbol{\mathcal{B}}) = \sum_{k \in \mathcal{K}} \frac{1}{K} \Bigg[ {(\|\mathbf{b}_k\| - |\alpha_k|)}^2 \nonumber \\
+ \omega_k {\left\| \frac{\mathbf{b}_k}{\|\mathbf{b}_k\|} - \mathbf{T P}\mathbf{d}_k e^{{\mathrm{j}} \angle \alpha_k} \right\|}^2 \Bigg], 
\end{align}
where $\omega_k$ is an arbitrary non-negative, sub-carrier weighting function. Some candidate choices for $\omega_k$ include: $\omega_k=1$, $\omega_k={\|\mathbf{b}_k\|}^2$, $\omega_k={\|\mathbf{b}_k\|}^2\big/ {\big(1 + {\|\mathbf{b}_k\|}^2 \big)}$, etc. 
In \eqref{eqn_tildef_obj_defn}, the first term of $\widetilde{\mathscr{F}}_{\mathrm{obj}}(\cdot)$ takes care of matching the desired power allocation, while the second term is used to design the analog precoding. Note that this objective is different from the least squares metric: $\sum_{k \in \mathcal{K}} {\left\| \mathbf{b}_k - \mathbf{T P}\mathbf{d}_k \alpha_k \right\|}^2/K$ used in conventional HBF design \cite{Ayach_TWC14}, and is chosen since it achieves better fairness across the sub-carriers.
Noting that $\|\mathbf{T P}\mathbf{d}_k\| = 1$, the design of the JPTA analog beamformer that minimizes $\widetilde{\mathscr{F}}_{\mathrm{obj}}(\cdot)$ is given by:
\begin{flalign}
& \boldsymbol{\tau}^{\circ}, \boldsymbol{\phi}^{\circ}, \angle \boldsymbol{\alpha}^{\circ} &
\nonumber \\
& \quad = \argmax_{\boldsymbol{\tau}, \boldsymbol{\phi}, \angle \boldsymbol{\alpha}} \left\{ \sum_{k \in \mathcal{K}} \omega_k {\mathrm{Re}} \left[ e^{{\mathrm{j}} \angle \alpha_k} \bar{\mathbf{b}}_k^{\dag} \mathbf{T} \mathbf{P} \mathbf{d}_{k} \right] \right\}, \!\!\!\!\!\!\!\! & \label{eqn_opt_prob_analog}
\end{flalign}
where we define $\bar{\mathbf{b}}_k \triangleq {\mathbf{b}}_k/\|{\mathbf{b}}_k\|$ and $\angle \boldsymbol{\alpha} \triangleq \{\angle \alpha_k | k \in \mathcal{K}\}$ and use $\circ$ to indicate optimality of the result. In the rest of the paper, without loss of generality, for \eqref{eqn_opt_prob_analog} we shall use the search ranges $-\frac{\kappa}{2W} \leq \tau_n \leq \frac{\kappa}{2W}$, $-\pi \leq \phi_m \leq \pi$ and $-\pi \leq \angle\alpha_k \leq \pi$. The final TTD values in range $[0,\kappa/W]$ can then simply be obtained by using $\tau_n \leftarrow \tau_n - \min_{1 \leq \bar{n} \leq N} \{\tau_{\bar{n}}\}$ -- a step which doesn't affect optimality and which shall be assumed to be implicitly performed at the end. 
Similarly, the optimal JPTA digital power allocation $|\boldsymbol{\alpha}| \triangleq \{|\alpha_k| | k \in \mathcal{K} \}$ that minimizes $\widetilde{\mathscr{F}}_{\mathrm{obj}}(\cdot)$ is given by:
\begin{align}
{|\boldsymbol{\alpha}|}^{\circ} = \argmin_{|\boldsymbol{\alpha}|} \left\{ \sum_{k \in \mathcal{K}} \frac{1}{K} {\left(\|\mathbf{b}_k\| - |\alpha_k|\right)}^2 \right\}, \label{eqn_opt_prob_digital}
\end{align}
subject to $\sum_{k \in \mathcal{K}} {|\alpha_k|}^2 \leq P_{\mathrm{sum}}$, where we use $\circ$ to indicate the optimality of the result.\footnote{Note that if the channel coefficients are known, a final update to this digital power allocation can be made based on the the analog precoding, for e.g., via water-filling.} 

\section{Digital and analog precoder design} \label{sec_proposed_algo}
Since the desired beamformer set $\{\mathbf{b}_k | k \in \mathcal{K}\}$ satisfies the power constraint, $\sum_{k \in \mathcal{K}} {\|\mathbf{b}_k\|}^2 \leq P_{\mathrm{sum}}$, the optimal solution to \eqref{eqn_opt_prob_digital} is ${|\alpha_k|}^{\circ} = \|{\mathbf{b}}_k\|$. 
Unfortunately, however, finding the joint optimal solution to $\boldsymbol{\phi}, \boldsymbol{\tau}, \angle\boldsymbol{\alpha}$ that maximizes \eqref{eqn_opt_prob_analog} is intractable. Therefore, we consider an alternating optimization approach where we find the conditionally optimal $\boldsymbol{\phi}, \boldsymbol{\tau}$ for a given $\angle\boldsymbol{\alpha}$ and vice-versa. By iterating over these optimizations for a few steps, the JPTA approximation to the desired beamformer can be obtained. 
For a given set of digital-beamforming phases $\angle\boldsymbol{\alpha}$ the optimal JPTA analog precoder can be obtained as:
\begin{flalign}
& \quad \boldsymbol{\tau}^{\circ}(\angle\boldsymbol{\alpha}), \boldsymbol{\phi}^{\circ}(\angle\boldsymbol{\alpha}) = \argmax_{\boldsymbol{\tau}, \boldsymbol{\phi}} \bigg\{ & \nonumber \\
& \qquad \qquad \sum_{k \in \mathcal{K}} \omega_k {\mathrm{Re}} \left[ e^{{\mathrm{j}} \angle \alpha_k} \bar{\mathbf{b}}_k^{\dag} \mathbf{T} \mathbf{P} \mathbf{d}_{k} \right] \bigg\}. & \label{eqn_opt_prob_analog_1}
\end{flalign}
Using the expressions for $\mathbf{T}$ and $\mathbf{d}_{k}$ from \eqref{eqn_tx_signal}, it can be shown that \eqref{eqn_opt_prob_analog_1} can be decoupled for each TTD $n$. For any TTD $n \in \{1,...,N\}$ let us define the set of antenna elements (and phase-shifters) connected to it be represented by $\mathcal{M}_n$. Then \eqref{eqn_opt_prob_analog_1} can be decoupled into $N$ optimization problems with the $n$-th one being:
\begin{flalign}
& \ \ {\tau}_n^{\circ}(\angle\boldsymbol{\alpha}), \{\phi_m^{\circ}(\angle\boldsymbol{\alpha}) | m \in \mathcal{M}_n \} = \argmax_{\tau_n, \{\phi_m | m \in \mathcal{M}_n \} } \bigg\{ & \nonumber \\
& \quad \sum_{k \in \mathcal{K}} \sum_{m \in \mathcal{M}_n} \omega_k {\mathrm{Re}} \left[ e^{{\mathrm{j}} \angle\alpha_k} {[\bar{\mathbf{b}}_k]}^{*}_m e^{{\mathrm{j}} \phi_{m}} e^{-{\mathrm{j}} 2 \pi f_k \tau_{n}} \right] \bigg\}. & \label{eqn_opt_prob_analog_2}
\end{flalign}
The optimal solution to \eqref{eqn_opt_prob_analog_2} for any $n$ and $m \in \mathcal{M}_n$ can be easily shown to be:
\begin{subequations}
\begin{flalign}
& \ \ {\tau}_n^{\circ}(\angle\boldsymbol{\alpha}) = \argmax_{\tau_n} \Bigg\{ & \nonumber \\
& \qquad \quad \sum_{m \in \mathcal{M}_n} \left| \sum_{k \in \mathcal{K}} \omega_k e^{{\mathrm{j}} \angle\alpha_k} {[\bar{\mathbf{b}}_k]}^{*}_m e^{-{\mathrm{j}} 2 \pi f_k \tau_{n}} \right| \Bigg\} \!\!\!\!\!\!\!\!\!\! & \label{eqn_opt_TTD} \\
& \ \ \phi_m^{\circ}(\angle\boldsymbol{\alpha}) = \angle \left[  \sum_{k \in \mathcal{K}} \omega_k e^{-{\mathrm{j}} \angle\alpha_k} {[\bar{\mathbf{b}}_k]}_m e^{{\mathrm{j}} 2 \pi f_k \tau_{n}^{\circ}} \right]. & \label{eqn_opt_phi}
\end{flalign}
\end{subequations}
Note that although \eqref{eqn_opt_TTD} is not in closed form, it can be solved using a line-search over all values of $\tau_n$ within the candidate range: $-\frac{\kappa}{2W} \leq \tau_n \leq \frac{\kappa}{2W}$. 
Alternatively. we have the following lemma:
\begin{lemma} \label{lemma_wLS}
An approximate closed-form solution to \eqref{eqn_opt_TTD} can be obtained, by solving as a first step, the following unconstrained weighted least squares (wLS) problem:
\begin{align}
{\tau}_n^{\circ}(\angle\boldsymbol{\alpha}) = \argmin_{\tau_n} \min_{\{\phi_m | m \in \mathcal{M}_n\}} \bigg\{ \sum_{m \in \mathcal{M}_n} \sum_{k \in \mathcal{K}} \omega_k \nonumber \\
\big| {[\bar{\mathbf{b}}_k]}_m \big| {\left[ 2 \pi f_k \tau_{n} - \phi_m + \mathscr{U}\Big(\angle {[\bar{\mathbf{b}}_k]}_{m} - \angle \alpha_k \Big)\right]}^2 \bigg\}, \label{eqn_lemma_wLS}
\end{align}
where $\mathscr{U}(\cdot)$ is the phase unwrapping function that for each $k$ adds integer shifts of $2 \pi$ to the argument to ensure that phase-difference between adjacent sub-carriers satisfies: 
$$\Big|\mathscr{U} \big(\angle {[\bar{\mathbf{b}}_k]}_{m} - \angle \alpha_k \big) - \mathscr{U} \big(\angle {[\bar{\mathbf{b}}_{k-1}]}_{m} - \angle \alpha_{k-1} \big) \Big| \leq \pi.$$ 
And then as a second step, performing clipping operations:
\begin{subequations} \label{eqn_wLS_bounds}
\begin{align}
\tau_n &= {\mathrm{mod}} \Big\{ \tau_n + \frac{K}{2W} , \frac{K}{W} \Big\} - \frac{K}{2W}, \\
\tau_n &= \max \Big\{ \min \Big\{ \tau_n, \frac{\kappa}{2W} \Big\}, -\frac{\kappa}{2W} \Big\}.
\end{align}
\end{subequations}
\end{lemma}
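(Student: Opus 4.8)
The plan is to show that the magnitude-maximization in \eqref{eqn_opt_TTD} is, up to a second-order phase approximation, equivalent to the weighted least-squares problem \eqref{eqn_lemma_wLS}, and that the two clipping steps in \eqref{eqn_wLS_bounds} account respectively for the periodicity and the feasible range of $\tau_n$. First I would reintroduce the phase-shifters that were eliminated in passing from \eqref{eqn_opt_prob_analog_2} to \eqref{eqn_opt_TTD}. For each $m \in \mathcal{M}_n$ define the phasor sum $g_m(\tau_n) = \sum_{k \in \mathcal{K}} \omega_k e^{{\mathrm{j}} \angle\alpha_k} {[\bar{\mathbf{b}}_k]}^{*}_m e^{-{\mathrm{j}} 2 \pi f_k \tau_n}$, so the objective of \eqref{eqn_opt_TTD} is exactly $\sum_{m \in \mathcal{M}_n} |g_m(\tau_n)|$. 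Using $|g_m| = \max_{\phi_m} \mathrm{Re}[e^{{\mathrm{j}}\phi_m} g_m]$ (the maximizer being precisely the optimal phase-shifter \eqref{eqn_opt_phi}), maximizing $\sum_m |g_m(\tau_n)|$ over $\tau_n$ becomes the joint maximization over $(\tau_n,\{\phi_m\})$ of $\sum_{m}\sum_{k} \omega_k \big|{[\bar{\mathbf{b}}_k]}_m\big| \cos\!\big( 2\pi f_k \tau_n - \phi_m + \angle{[\bar{\mathbf{b}}_k]}_m - \angle\alpha_k \big)$, where I have written $\mathrm{Re}[z]=|z|\cos(\angle z)$, collected the unit-modulus factors, and used the evenness of $\cos$ to flip the sign of the argument.

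The key step is then the small-angle approximation $\cos x \approx 1 - x^2/2$. Since the resulting constant is independent of the optimization variables, maximizing the cosine sum becomes minimizing $\sum_m \sum_k \omega_k \big|{[\bar{\mathbf{b}}_k]}_m\big| \big( 2\pi f_k \tau_n - \phi_m + \angle{[\bar{\mathbf{b}}_k]}_m - \angle\alpha_k \big)^2$, which is exactly the weighted least-squares objective in \eqref{eqn_lemma_wLS}, with weights $\omega_k |{[\bar{\mathbf{b}}_k]}_m|$ inherited from the Taylor expansion. Because $\cos$ is $2\pi$-periodic its argument is defined only modulo $2\pi$, and the quadratic surrogate is meaningful only on the principal branch; this is precisely the role of the unwrapping operator $\mathscr{U}(\cdot)$, which adds integer multiples of $2\pi$ so that the residual phases vary smoothly across adjacent subcarriers and stay near zero where the approximation is tight. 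I would then observe that \eqref{eqn_lemma_wLS} is an ordinary linear least-squares problem in $(\tau_n,\{\phi_m\})$: eliminating each $\phi_m$ by its weighted-mean minimizer reduces the objective to a scalar quadratic in $\tau_n$, whose minimizer is closed-form — which is the ``approximate closed-form solution'' claimed.

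Finally I would justify the clipping steps \eqref{eqn_wLS_bounds}. Writing $f_k = f_0 + kW/K$, the factor $e^{-{\mathrm{j}} 2\pi f_0 \tau_n}$ has unit modulus, so $|g_m(\tau_n)|$ depends on $\tau_n$ only through $e^{-{\mathrm{j}} 2\pi k (W/K)\tau_n}$ and is therefore periodic in $\tau_n$ with period $K/W$. Hence the first clipping step, which wraps the unconstrained wLS minimizer into the fundamental period $[-K/(2W), K/(2W)]$, leaves the true objective \eqref{eqn_opt_TTD} unchanged and entails no loss of optimality, while the second step merely projects the result onto the feasible delay interval $[-\kappa/(2W), \kappa/(2W)]$ required by the hardware.

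The main obstacle I expect is the rigorous justification of the second-order phase approximation coupled with the unwrapping: the equivalence between magnitude-maximization and phase-residual minimization is exact only in the limit of vanishing residuals, so the reduction is necessarily approximate, and the delicate point is arguing that $\mathscr{U}$ indeed keeps the per-subcarrier residuals within the regime where $\cos x \approx 1 - x^2/2$ is accurate. Everything else — the reintroduction of $\phi_m$, the linear least-squares solution, and the periodicity argument underlying the clipping — is exact and routine.
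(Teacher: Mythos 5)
Your proposal follows the paper's own proof (Appendix~\ref{appdix1}) essentially step for step: the paper likewise posits that near the optimum the residual phases $\phi_m - 2\pi f_k \tau_n - \mathscr{U}\big(\angle{[\bar{\mathbf{b}}_k]}_m - \angle\alpha_k\big)$ are small, applies a second-order Taylor expansion of the complex exponential (your $\cos x \approx 1 - x^2/2$) to convert \eqref{eqn_opt_prob_analog_2} into the weighted least-squares problem \eqref{eqn_lemma_wLS} with weights $\omega_k \big|{[\bar{\mathbf{b}}_k]}_m\big|$, and justifies the clipping \eqref{eqn_wLS_bounds} by the $K/W$-periodicity of the true objective together with convexity of the quadratic surrogate on the fundamental period. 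Your explicit reintroduction of the phase-shifters via $|g_m| = \max_{\phi_m} \mathrm{Re}\big[e^{{\mathrm{j}}\phi_m} g_m\big]$ and the subcarrier-spacing argument $f_k = f_0 + kW/K$ merely spell out steps the paper leaves implicit, so the two arguments are the same in substance.
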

\begin{proof}
See Appendix \ref{appdix1}.
\end{proof}

Next, for a given $\boldsymbol{\phi}, \boldsymbol{\tau}$, the optimal solution to $\angle \boldsymbol{\alpha}$ in \eqref{eqn_opt_prob_analog} can be found as:
\begin{align}
\angle \alpha_k = \angle \left[ \sum_{n=1}^{N} \sum_{m \in \mathcal{M}_n} {[\bar{\mathbf{b}}_k]}_m e^{-{\mathrm{j}} \phi_{m}} e^{{\mathrm{j}} 2 \pi f_k \tau_{n}} \right] \label{eqn_opt_psi}.
\end{align}
Now using \eqref{eqn_opt_TTD}, \eqref{eqn_opt_phi} and \eqref{eqn_opt_psi} recursively, we can design the JPTA analog beamformer. This iterative approach is summarized in Algorithm \ref{Algo1}, where we also use an intermediate step of adding a constant offset to all the TTD values to push them to the middle of the `feasible' range. It can be verified that removing a common offset $\bar{\tau}$ from all the TTDs doesn't affect the optimality of the solution $\{\tau_n | 1 \leq n \leq N\}$ since the change in JPTA beamformer phase can be countered by adding $\angle\alpha_k \leftarrow \angle\alpha_k - 2 \pi f_k \bar{\tau}$. 
\begin{algorithm}
\caption{Iterative optimization of $\boldsymbol{\tau}, \boldsymbol{\phi}, \angle \boldsymbol{\alpha}$}
\label{Algo1}
\begin{algorithmic} 
\STATE Given: ${\mathbf{b}}_k$ for $k \in \mathcal{K}$
\STATE Compute $|\alpha_k| = \|{\mathbf{b}}_k\|$ for each $k \in \mathcal{K}$.
\STATE Compute $\bar{\mathbf{b}}_k \triangleq {\mathbf{b}}_k/\|{\mathbf{b}}_k\|$ for each $k \in \mathcal{K}$.
\STATE Initialize $\angle \alpha_k = 0$ for each $k \in \mathcal{K}$.
\FOR{$i=1:1:\text{max-iter}$}
\STATE \% \textit{Optimize $\boldsymbol{\tau}, \boldsymbol{\phi}$ for given $\angle\boldsymbol{\alpha}$}
\FOR{$n=1:1:N$}
\STATE Compute $\tau_n$ using either \eqref{eqn_opt_TTD} or Lemma \ref{lemma_wLS}.
\STATE Compute $\phi_m$ using \eqref{eqn_opt_phi} for each $m \in \mathcal{M}_n$.
\ENDFOR
\STATE \% \textit{Push $\boldsymbol{\tau}$ to center of the allowed range.}
\STATE ${\tau}_{\mathrm{min}} = \min_{n} \{\tau_{n} \}$
\STATE ${\tau}_{\mathrm{max}} = \max_{n} \{\tau_{n} \}$
\STATE $\bar{\tau} = \max \left\{ \min \left\{ \sum_{n} \frac{\tau_{n}}{N}, \frac{\kappa}{2W} + {\tau}_{\mathrm{min}} \right\}, {\tau}_{\mathrm{max}} - \frac{\kappa}{2W}\right\}$
\STATE Set $\tau_n \leftarrow \tau_n - \bar{\tau}$ for each $n$.
\STATE Set $\angle \alpha_k \leftarrow \angle \alpha_k - 2 \pi f_k \bar{\tau}$ for each $k \in \mathcal{K}$
%
\STATE \% \textit{Optimize $\angle\boldsymbol{\alpha}$ for given $\boldsymbol{\tau}, \boldsymbol{\phi}$}
\FOR{$k \in \mathcal{K}$}
\STATE Compute $\angle\alpha_k$ using \eqref{eqn_opt_psi}.
\ENDFOR
\ENDFOR
\IF{$\tau_n \geq 0$ required (see Section \ref{sec_prob_formulate})} 
\STATE $\tau_{\rm min} = \min_{1\leq n \leq N}\{ \tau_n\}$.
\STATE $\tau_n \leftarrow \tau_n - \tau_{\rm min}$ for each $n$.
\STATE $\angle \alpha_k \leftarrow \angle \alpha_k - 2 \pi f_k \tau_{\rm min}$ for each $k \in \mathcal{K}$.
\ENDIF
\STATE Return values of $\{\tau_n | 1 \leq n \leq N\}$, $\{\phi_m | 1 \leq m \leq M\}$ and $\{\alpha_k | k \in \mathcal{K}\}$.
\end{algorithmic}
\end{algorithm}
Note that in this work we have considered a continuous TTD range of $0 \leq \tau_n \leq \kappa / W$. If we consider a discrete set of feasible values for TTDs, we have the following remark:
\begin{remark}
If the TTD feasible range is a discrete set, \eqref{eqn_opt_TTD} can be solved by a line-search over the discrete set. Due to the convexity of the objective, \eqref{eqn_lemma_wLS} can be solved by first obtaining the TTD solution considering a continuous TTD range, and then rounding it to the nearest discrete value.
\end{remark}

\section{Baseline algorithms} \label{sec_baseline_algos}
In this section, we propose some simple heuristic designs of JPTA parameters to realize beam behaviors 1 and 2. While providing closed-form (unlike Algorithm \ref{Algo1}) solutions for these beam behaviors, they also serve as baselines to evaluate the efficacy of Algorithm \ref{Algo1}. However, it should be emphasized that unlike these heuristics, Algorithm \ref{Algo1} can approximate any desired beam behavior $\{\mathbf{b}_k | k \in \mathcal{K}\}$ as shall also be shown in Section \ref{sec_future_dir}. 

\subsubsection{Behavior 1} For beam behavior 1, we have: 
$${[\bar{\mathbf{b}}_k]}_m = e^{{\mathrm{j}} \pi m \sin \left(\theta_1 + \frac{k}{K} \Delta \theta \right) f_k/f_0 } /\sqrt{M},$$
which has an `almost' linear phase variation as a function of $k$ with a slope of: $m \pi [\sin(\theta_0 + \frac{\Delta \theta}{2})f_{\mathrm{max}} - \sin(\theta_0-\frac{\Delta\theta}{2}) f_{\mathrm{min}}]/(W f_0)$, where $f_{\mathrm{max}} \triangleq f_{\lfloor (K-1)/2 \rfloor}$, $f_{\mathrm{min}} \triangleq f_{\lfloor (1-K)/2 \rfloor}$ and $W =  f_{\mathrm{max}} - f_{\mathrm{min}}$ is the system bandwidth. This linear phase-variation can be realized using a TTD. Since all antennas in set $\mathcal{M}_n$ share the same TTD $\tau_n$, we take the mean of the phase-variation slopes for these antennas to compute the value of $\tau_n$. The phase-shifts $\phi_m$ are then set to ensure an exact match to ${[\bar{\mathbf{b}}_k]}_m$ at the center sub-carrier $k=0$. This algorithm is summarized in Algorithm \ref{Algo2}.\footnote{This algorithm can be interpreted as a generalization of the heuristic proposed in \cite{Design_Boljanovic2020, Dynamic_Yan2021, Fast_Boljanovic2021}.} 
\begin{algorithm} [t]
\caption{Heuristic solution for behavior 1}
\label{Algo2}
\begin{algorithmic} 
\STATE Given $\theta_0, \Delta \theta$.
\STATE Compute $|\alpha_k| = \sqrt{P_{\textrm{sum}}/K}$ for each $k \in \mathcal{K}$.
\STATE Define $f_{\mathrm{max}} \triangleq f_{\lfloor (K-1)/2 \rfloor}$, $f_{\mathrm{min}} \triangleq f_{\lfloor (1-K)/2 \rfloor}$
\FOR{$n=1:1:N$}
\STATE $\tau_n = \sum_{m \in \mathcal{M}_n} \frac{m[\sin(\theta_0 - \frac{\Delta \theta}{2}) f_{\mathrm{min}} - \sin(\theta_0 + \frac{\Delta \theta}{2})f_{\mathrm{max}}]}{2 W f_0 |\mathcal{M}_n|}$.
\ENDFOR
\STATE Set $\bar{\tau} = \sum_{n=1}^{N} \tau_{n} / {N}$.
\STATE Set $\tau_n \leftarrow \tau_n - \bar{\tau}$ for each $n$.
\STATE Set $\tau_n \leftarrow \min \big\{ \max\{\tau_n, -\frac{\kappa}{2W}\}, \frac{\kappa}{2W} \big\}$ for each $n$
\FOR{$n=1:1:N$}
\FOR{$m \in \mathcal{M}_n$}
\STATE $\phi_m = \pi (m-1) \sin(\theta_0) + 2 \pi f_{0} \tau_n$.
\STATE $\phi_m \leftarrow {\mathrm{mod}} \left\{ \phi_m + \pi, 2 \pi \right\} - \pi$.
\ENDFOR
\ENDFOR
\FOR{$k \in \mathcal{K}$}
\STATE Compute $\angle\alpha_k$ using \eqref{eqn_opt_psi}.
\ENDFOR
\IF{$\tau_n \geq 0$ required (see Section \ref{sec_prob_formulate})} 
\STATE $\tau_{\rm min} = \min_{1\leq n \leq N}\{ \tau_n\}$.
\STATE $\tau_n \leftarrow \tau_n - \tau_{\rm min}$ for each $n$.
\STATE $\angle \alpha_k \leftarrow \angle \alpha_k - 2 \pi f_k \tau_{\rm min}$ for each $k \in \mathcal{K}$.
\ENDIF
\STATE Return values of $\{\tau_n | 1 \leq n \leq N\}$, $\{\phi_m | 1 \leq m \leq M\}$ and $\{\alpha_k | k \in \mathcal{K}\}$.
\end{algorithmic}
\end{algorithm}

\subsubsection{Behavior 2} For beam behavior 2, note that: 
\begin{align}
{[\bar{\mathbf{b}}_k]}_m = \left\{ \begin{array}{ll} 
e^{{\mathrm{j}} \pi m \sin(\theta_1)f_k/f_0} / \sqrt{M} & \text{for $k < 0$} \\
e^{{\mathrm{j}} \pi m \sin(\theta_2)f_k/f_0} / \sqrt{M} & \text{for $k \geq 0$}
\end{array} \right. , \nonumber
\end{align}
i.e., a `step-function' like phase-variation is required with $k$ for each antenna $m$. Since such a sharp step-function can't achieved using TTDs, we design the TTDs to realize a linear-phase approximation to this step function, that also passes through the sum of the two array responses: 
$$[\widetilde{\mathbf{b}}]_{m} =  [e^{{\mathrm{j}} \pi m \sin(\theta_2)} + e^{{\mathrm{j}} \pi m \sin(\theta_2)}] \Big/ \sqrt{2 M}.$$
The slope of the linear-approximation is then given by: $3 \angle \big[ {[\widetilde{\mathbf{b}}]}^{*}_m e^{{\mathrm{j}} \pi m \sin(\theta_2)f_k/f_0} \big]/W$. Since all antennas in set $\mathcal{M}_n$ share the same TTD $\tau_n$, we take the mean of the phase-variation slopes for these antennas to compute the value of $\tau_n$. The phase-shifts $\phi_m$ are set to ensure that at $k=0$ the beam shape is aligned with $\widetilde{\mathbf{b}}$. This algorithm is summarized in Algorithm \ref{Algo3}.
\begin{algorithm} [t]
\caption{Heuristic solution for behavior 2}
\label{Algo3}
\begin{algorithmic} 
\STATE Given $\theta_1, \theta_2$.
\STATE Compute $|\alpha_k| = \sqrt{P_{\textrm{sum}}/K}$ for each $k \in \mathcal{K}$.
\STATE ${[\widetilde{\mathbf{b}}]}_{m} =  [e^{{\mathrm{j}} \pi m \sin(\theta_2)} + e^{{\mathrm{j}} \pi m \sin(\theta_2)}] \big/\sqrt{2M}.$
\FOR{$n=1:1:N$}
\STATE $\tau_n = \frac{-3}{2\pi W} \angle \left( \sum_{m \in \mathcal{M}_n} {[\widetilde{\mathbf{b}}]}^{*}_m e^{{\mathrm{j}} \pi m \sin(\theta_2)f_{\lfloor K/3 \rfloor}/f_{0}} \right) $.
\ENDFOR
\STATE Set $\bar{\tau} = \sum_{n=1}^{N} \tau_{n} / {N}$.
\STATE Set $\tau_n \leftarrow \tau_n - \bar{\tau}$ for each $n$.
\STATE Set $\tau_n \leftarrow \min \big\{ \max\{\tau_n, -\frac{\kappa}{2W}\}, \frac{\kappa}{2W} \big\}$ for each $n$.
\FOR{$n=1:1:N$}
\FOR{$m \in \mathcal{M}_n$}
\STATE $\phi_m = \angle{[\widetilde{\mathbf{b}}]}_m + 2 \pi f_{0} \tau_n$.
\STATE $\phi_m \leftarrow {\mathrm{mod}} \left\{ \phi_m + \pi, 2 \pi \right\} - \pi$.
\ENDFOR
\ENDFOR
\FOR{$k \in \mathcal{K}$}
\STATE Compute $\angle\alpha_k$ using \eqref{eqn_opt_psi}.
\ENDFOR
\IF{$\tau_n \geq 0$ required (see Section \ref{sec_prob_formulate})} 
\STATE $\tau_{\rm min} = \min_{1\leq n \leq N}\{ \tau_n\}$.
\STATE $\tau_n \leftarrow \tau_n - \tau_{\rm min}$ for each $n$.
\STATE $\angle \alpha_k \leftarrow \angle \alpha_k - 2 \pi f_k \tau_{\rm min}$ for each $k \in \mathcal{K}$.
\ENDIF
\STATE Return values of $\{\tau_n | 1 \leq n \leq N\}$, $\{\phi_m | 1 \leq m \leq M\}$ and $\{\alpha_k | k \in \mathcal{K}\}$.
\end{algorithmic}
\end{algorithm}
Based on the heuristic algorithms we also have the following remark:
\begin{remark} \label{rem_maxTTD}
The maximum TTD delays required to realize beam-behaviors 1 and 2 using the heuristic solutions in Algorithms \ref{Algo2} and \ref{Algo3} (without the need for clipping) are: $M |\sin(\Delta \theta/2)|/W$ and $3/W$, respectively. 
\end{remark}
In other words, the required TTD range scales linearly with number of antennas $M$ for behavior 1 but doesn't scale with $M$ for behavior 2.

\section{Simulation results} \label{sec_simulation_results}
For simulations we consider the downlink of a THz system, operating at a carrier frequency of $100$ GHz and OFDM modulation with $K = 2048$ sub-carriers indexed as $\mathcal{K} = \{-1024,...,1023\}$. The system bandwidth is set to $W=10$ GHz with the sub-carrier frequencies set as: $f_k = 100 + 10k/K$ GHz.\footnote{The impact of cyclic prefix on bandwidth is ignored here for convenience.} The BS has half-wavelength-spaced (at $100$GHz) uniform linear array with $N=64$ antenna elements. The BS has a JPTA architecture with one RF chain, $N \leq M$ TTDs and $M$ phase-shifters, and the mapping matrix $\mathbf{P}$ is such that we have $\mathcal{M}_n = \{m \in \mathbb{N} | (n-1)M/N < m \leq nM/N \}$. This connection structure is illustrated in Fig.~\ref{Fig_connected_structures}. Note that this allows adjacent antenna elements to share the same TTD when $N < M$, which makes hardware routing easier. Several other architectures with varying complexity that are well suited for different beam-behaviors, a few of which are discussed later in Section \ref{sec_future_dir}. 
For emulating behavior 1 and 2, we use the parameters $\theta_0 = \pi/6$, $\Delta \theta = \pi/4$, $\theta_1 = -\pi/4$, $\theta_2 = \pi/6$. For all the results in this section, we set the sub-carrier weights as: $\omega_k=1$ for all $k \in \mathcal{K}$. 
%
\begin{figure}[!htb]
\centering
\includegraphics[width= 0.45\textwidth]{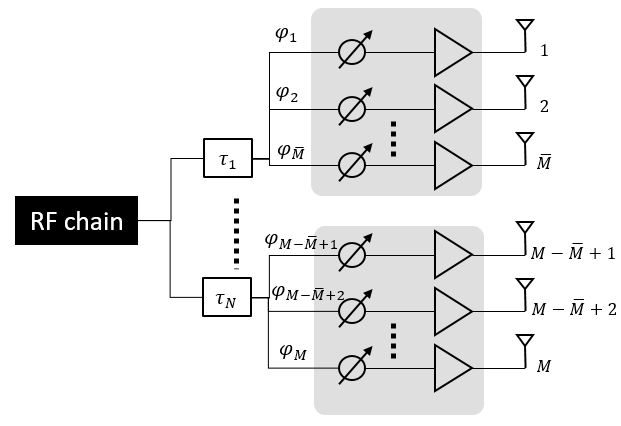}
\caption{An illustration of the proposed architecture for the mapping matrix $\mathbf{P}$. In these figures, each TTD is connected to $\bar{M} = M/N$ antennas.}
\label{Fig_connected_structures}
\end{figure}

First we consider the optimistic scenario with one TTD per antenna $N=M$ and a large adjustable TTD range $\kappa = M$ (this implies an TTD adjustable range $[0,6.4]$ ns). For this case, the achieved beamforming gain with JPTA using Algorithm \ref{Algo1} are compared to the ideal beamformers $\{\bar{\mathbf{b}}_k | k \in \mathcal{K}\}$ in Fig.~\ref{Fig_ant_gain_JPTA}. As can be seen from the results, JPTA with abundant TTDs and a large adjustable range can accurately achieve the desired beam behavior 1. For behavior 2, although the desired sharp transition cannot be achieved, we indeed observe that the main beam lobe switches from one angle to another gradually, achieving the desired behavior. 
\begin{figure}[!htb]
\centering
\subfloat[Ideal, Behavior 1]{\includegraphics[width= 0.24\textwidth]{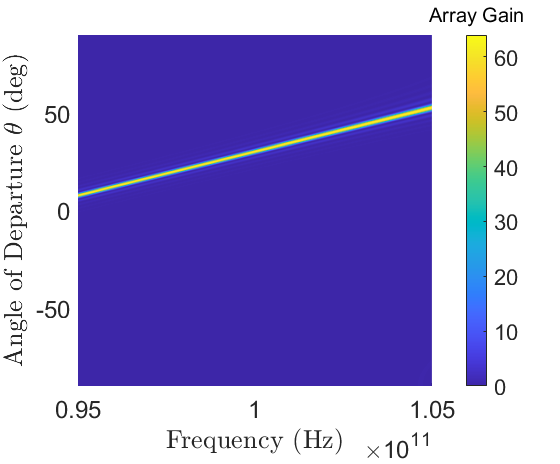} \label{Fig_behav1_ideal}}
\subfloat[JPTA, Behavior 1]{\includegraphics[width= 0.24\textwidth]{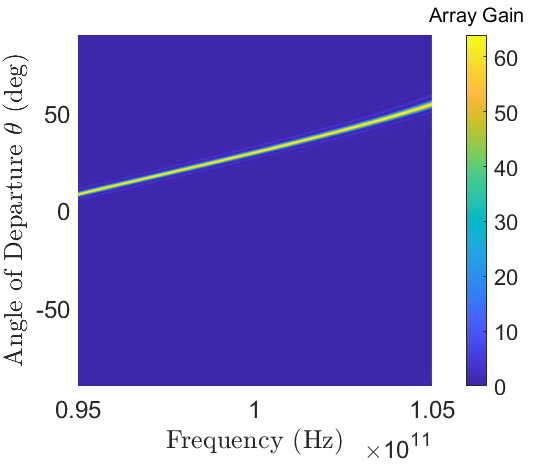} \label{Fig_behav1_algo1}} \\
\subfloat[Ideal, Behavior 2]{\includegraphics[width= 0.24\textwidth]{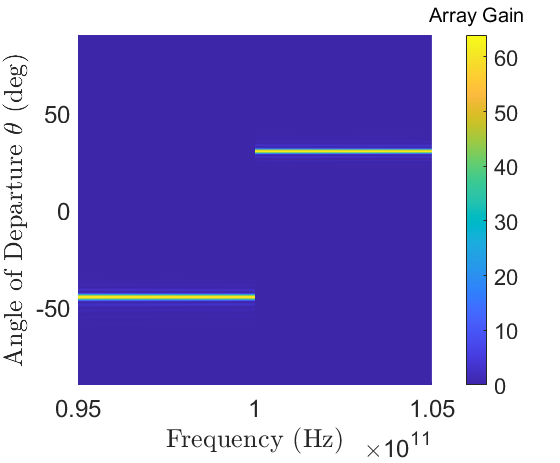} \label{Fig_behav2_ideal}}
\subfloat[JPTA, Behavior 2]{\includegraphics[width= 0.24\textwidth]{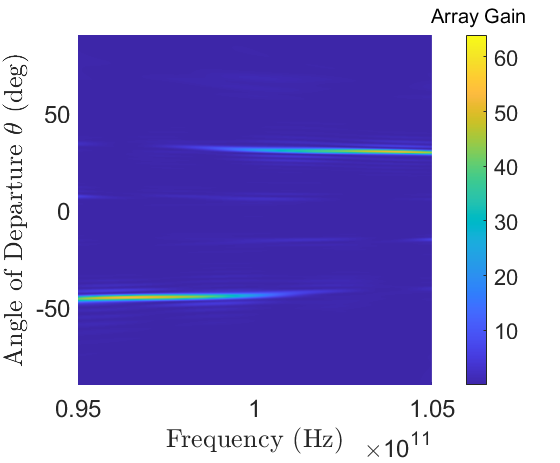} \label{Fig_behav2_algo1}}
\caption{An illustration of the array gain: ${|\mathbf{a}_k(\theta)^{\dag} \mathbf{T} \mathbf{P} \mathbf{d}_{k}|}^2$, achievable with JPTA beamforming with Algorithm \ref{Algo1} (with \eqref{eqn_opt_TTD}) to replicate beam behaviors 1 ($\theta_0 = \pi/6$, $\Delta \theta = \pi/4$) and behavior 2 ($\theta_1 = -\pi/4$, $\theta_2 = \pi/6$). Here we use $\text{max-iter} = 10$ for Algorithm \ref{Algo1}.}
\label{Fig_ant_gain_JPTA}
\end{figure}
Since \eqref{eqn_opt_prob_digital} can be solved exactly, in the rest of the section we shall use the objective in \eqref{eqn_opt_prob_analog}, i.e.,
\begin{align} \label{eqn_f_obj_defn}
\mathscr{F}_{\mathrm{obj}} = \sum_{k \in \mathcal{K}} \omega_k \left|  {\bar{\mathbf{b}}}_k^{\dag} \mathbf{T} \mathbf{P} \mathbf{d}_{k} \right| \Big/ \sum_{\bar{k} \in \mathcal{K}} \omega_{\bar{k}} ,
\end{align}
to quantify the `goodness of fit' between JPTA beamforming and the desired behavior $\boldsymbol{\mathcal{B}}$. 

In Fig.~\ref{Fig_JPTA_impact_of_N}, we study the impact of reducing the number of TTDs $N$ on $\mathscr{F}_{\mathrm{obj}}$ for both beam behavior 1 and 2. Here we also include a comparison to the heuristic designs in Algorithm \ref{Algo2} and \ref{Algo3}. As can be observed from the results, Algorithm \ref{Algo1} can outperform  the heuristic methods for all values of $N$. In addition, we observe that the use of Lemma \ref{lemma_wLS} in Algorithm \ref{Algo1} reduces the computational complexity significantly, albeit with a negligibly small drop in $\mathscr{F}_{\mathrm{obj}}$. It can also be observed that, in its proposed architecture, JPTA can replicate behavior 1 more accurately than behavior 2. However, behavior 2 is more robust to reducing the number of TTDs $N$ in comparison to behavior 1.
\begin{figure}[!htb]
\centering
\subfloat[Behavior 1 ($\theta_0 = \pi/6$, $\Delta \theta = \pi/4$)]{\includegraphics[width= 0.45\textwidth]{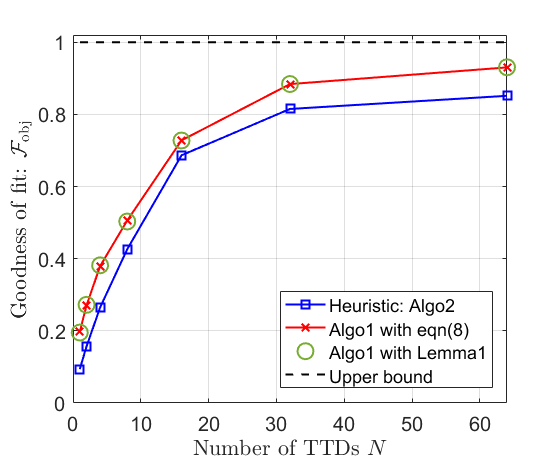} \label{Fig_varyN_behav1}} \\ 
\subfloat[Behavior 2 ($\theta_1 = -\pi/4$, $\theta_2 = \pi/6$)]{\includegraphics[width= 0.45\textwidth]{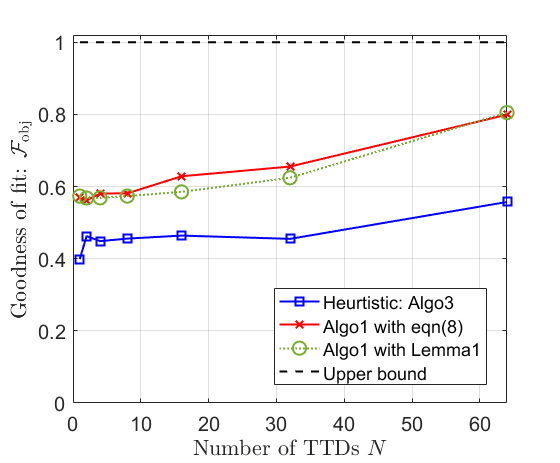} \label{Fig_varyN_behav2}}
\caption{An illustration of the objective $\mathscr{F}_{\mathrm{obj}}$ achievable with JPTA under different design algorithms, with varying number of TTDs used. Here we use $\text{max-iter} = 10$ for Algorithm \ref{Algo1}.}
\label{Fig_JPTA_impact_of_N}
\end{figure}

Next in Fig.~\ref{Fig_JPTA_impact_of_kappa}, we study the impact of varying the adjustable range of the TTDs $\kappa$ on beam behavior 1 and 2. As can be observed from the results, Algorithm \ref{Algo1} can outperform  the heuristic methods for all values of $\kappa$. In addition, we observe that the use of Lemma \ref{lemma_wLS} in Algorithm \ref{Algo1} reduces the computational complexity significantly, albeit with a negligibly small drop in $\mathscr{F}_{\mathrm{obj}}$. From the results, we also observe that $\mathscr{F}_{\mathrm{obj}}$ saturates beyond a value of $\kappa = M \sin(\Delta \theta)$ for behavior 1 and $\kappa = 2$ for behavior 2, which is in alignment with Remark \ref{rem_maxTTD}. 
\begin{figure}[!htb]
\centering
\subfloat[Behavior 1 ($\theta_0 = \pi/6$, $\Delta \theta = \pi/4$)]{\includegraphics[width= 0.45\textwidth]{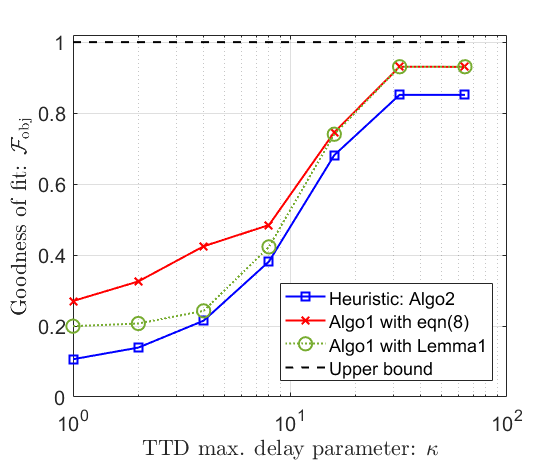} \label{Fig_varykappa_behav1}} \\ 
\subfloat[Behavior 2 ($\theta_1 = -\pi/4$, $\theta_2 = \pi/6$)]{\includegraphics[width= 0.45\textwidth]{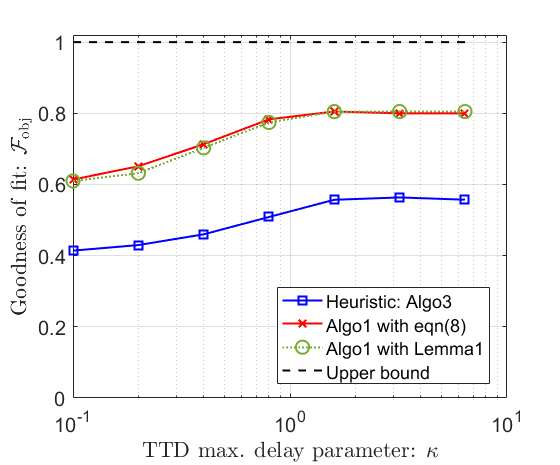} \label{Fig_varykappa_behav2}}
\caption{An illustration of the objective $\mathscr{F}_{\mathrm{obj}}$ achievable with JPTA under different design algorithms, with varying TTD range $\tau_n \in [0,\kappa/W]$. Here we use $\text{max-iter} = 10$ for Algorithm \ref{Algo1}.}
\label{Fig_JPTA_impact_of_kappa}
\end{figure}

Finally, we study the convergence speed of Algorithm \ref{Algo1} as a function of the number of iterations $\text{max-iter}$. Let us define $\mathscr{F}_{\mathrm{obj}}(i)$ to be the value in \eqref{eqn_f_obj_defn} achieved with $\text{max-iter}=i$ in Algorithm \ref{Algo1}. In Fig.~\ref{Fig_convergence_algo1}, the we plot the mean, $10$th percentile and $90$th percentile values of $\mathscr{F}_{\mathrm{obj}}(i)/\mathscr{F}_{\mathrm{obj}}(30)$, where the statistics are computed over many different realizations of $\kappa, N, \theta_0, \Delta \theta$ for behavior 1 and $\kappa, N, \theta_1, \theta_2$ for behavior 2. As can be seen from the results, $\text{max-iter}=10$ is sufficient to ensure Algorithm \ref{Algo1} converges for both beam behaviors, although convergence is faster for beam-behavior 2.\footnote{Although we initialize $\angle\alpha_k=0$ in Algorithm \ref{Algo1} to get these results, similar convergence was also observed for other random initializations.}
\begin{figure}[!htb]
\centering
\subfloat[Behavior 1]{\includegraphics[width= 0.45\textwidth]{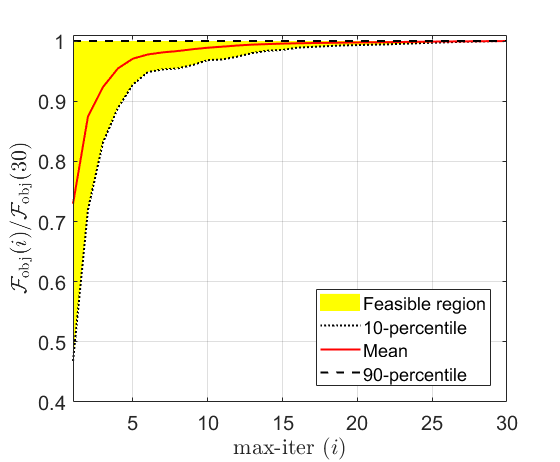} \label{Fig_convergence_algo1_behav1}} \\ 
\subfloat[Behavior 2]{\includegraphics[width= 0.45\textwidth]{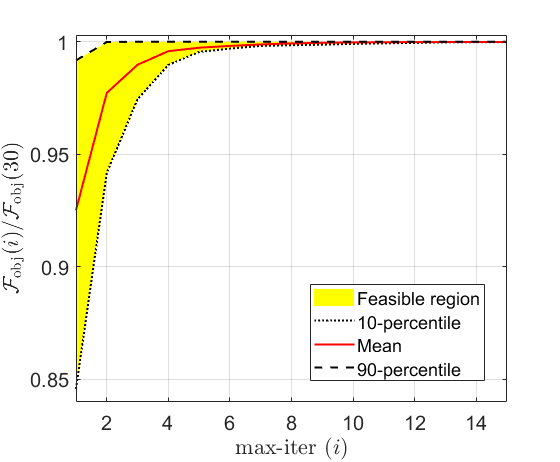} \label{Fig_convergence_algo1_behav2}}
\caption{An illustration of convergence speed of Algorithm \ref{Algo1} versus $\text{max-iter}$. Results show statistics of $\mathscr{F}_{\mathrm{obj}}(i)/\mathscr{F}_{\mathrm{obj}}(30)$ computed over many different random realizations of beam parameters and JPTA components.}
\label{Fig_convergence_algo1}
\end{figure}

\section{JPTA versus conventional HBF} \label{sec_compare_hybrid_BF}

\begin{figure}[!htb]
\centering
\subfloat[Fully-connected]{\includegraphics[width= 0.24\textwidth]{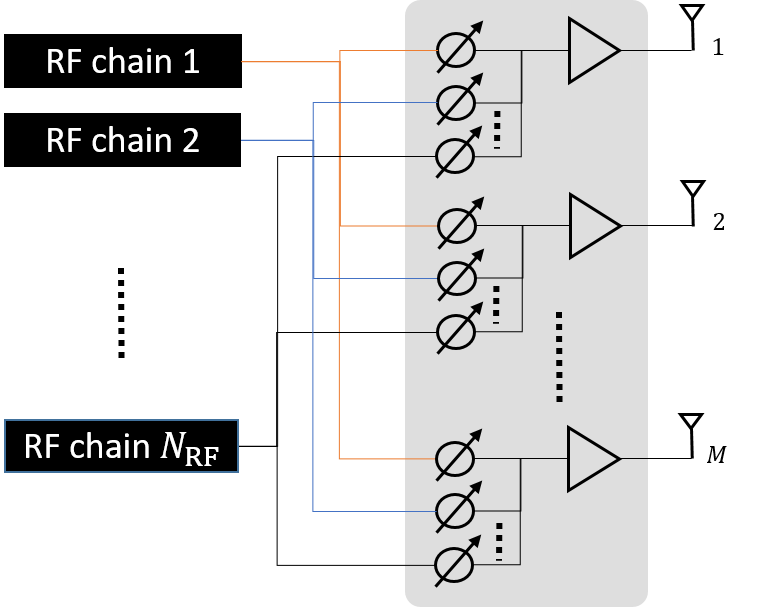} \label{Fig_HBF_FC_architecture}} 
\subfloat[Partially-connected]{\includegraphics[width= 0.24\textwidth]{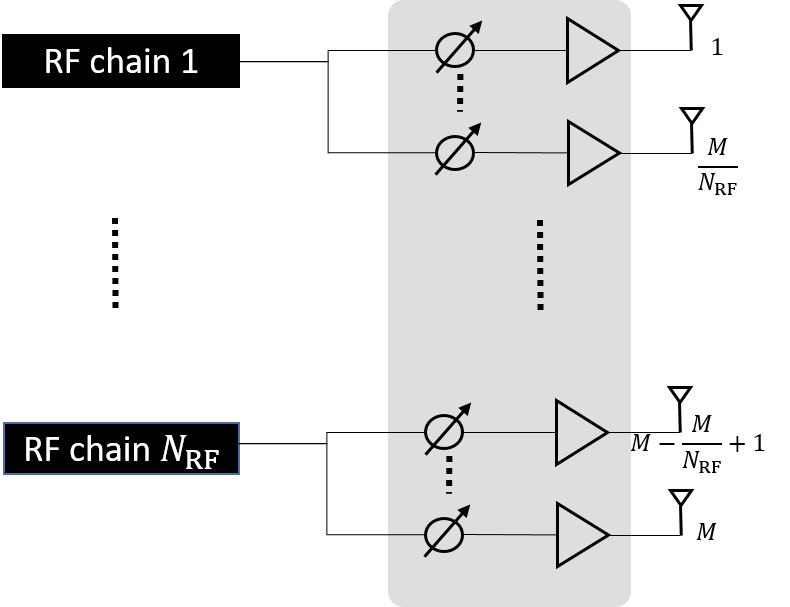} \label{Fig_HBF_PC_architecture}}
\caption{An illustration of two conventional HBF architectures.}
\label{Fig_HBF_architecture}
\end{figure}

In conventional HBF, while the analog hardware components have a frequency-flat response, frequency-dependent beamforming can still be realized in the digital domain by using multiple RF chains. Therefore, for comparison to JPTA, in this section we study how many RF chains are required to emulate the beam behaviors in Section \ref{sec_des_beam_behave} using conventional HBF structures. 
In particular, we consider two HBF structures shown in \figref{Fig_HBF_architecture}:
\begin{enumerate}
\item Fully-connected (FC) structure: In this architecture, each antenna is connected to each RF chain via a dedicated phase-shifter, thus requiring $N_{\mathrm{RF}}$ RF chains and $N_{\mathrm{RF}} M$ phase-shifters.
\item Partially-connected (PC) structure: In this structure, the whole antenna array is equally divided into $N_{\mathrm{RF}}$ sub-arrays, with antenna elements $\widetilde{\mathcal{M}}_n = \big\{m \in \mathbb{N} \big| (n-1)M/N_{\mathrm{RF}} < m \leq nM/N_{\mathrm{RF}} \big\}$, being connected to the $n$-th RF chain. This structure requires $N_{\mathrm{RF}}$ RF chains and $M$ phase-shifters.
\end{enumerate}
It is worth highlighting that the main benefit of multiple RF chains, i.e., spatial multiplexing, is not exploited when emulating such behaviors 1 and 2. In addition, the hardware cost and power consumption associated with JPTA with 1 RF chain is much lower than that of HBF with multiple RF chains. Finally, JPTA can also potentially be used in combination with multiple RF chains. Thus this study may not be a true apples-to-apples comparison between JPTA and conventional HBF.

\subsubsection{Fully-connected structure (HBF-FC)}
To emulate a desired beam behavior $\mathcal{B}$ with HBF-FC, we formulate the beamformer design problem as:
\begin{align}
    \min_{\FRF, \mathbf{f}_{{\mathrm{BB}}, k}} & \sum_{k \in \mathcal{K}} {\left\| \mathbf{b}_{k}  - \FRF \fBBk \right \|}^2_F \nonumber \\
    {\mathrm{s.t.}} \quad & \left|\left[\FRF \right]_{i, j} \right|=1, \forall i, j, \nonumber \\
    & \sum_{k \in \mathcal{K}} {\left\|\FRF \fBBk \right\|}^2 = P_{\mathrm{sum}},
\end{align}
where $\FRF \in \mathbb{C}^{M\times N_{\mathrm{RF}}}$ is the phase-shifter-based analog beamforming matrix which is the same across the subcarriers, and $\mathbf{f}_{{\mathrm{BB}}, k} \in \mathbb{C}^{N_{\mathrm{RF}} \times 1}$ is the digital beamforming vector at the $k$-th subcarrier.
By stacking the desired beamformer in a single matrix $\mathbf{B} \triangleq \left[\mathbf{b}_{\lfloor (1-K)/2 \rfloor}, \cdots, \mathbf{b}_{\lfloor (K-1)/2 \rfloor} \right] \in \mathbb{C}^{ M \times K}$, and the digital beamforming in a single matrix $\mathbf{F}_{\mathrm{BB}} \triangleq \left[\mathbf{f}_{{\mathrm{BB}}, \lfloor (1-K)/2 \rfloor}, \cdots, \mathbf{f}_{{\mathrm{BB}}, \lfloor (K-1)/2 \rfloor} \right] \in \mathbb{C}^{ N_{\mathrm{RF}}\times K}$, the optimization problem can be re-written as:
\begin{align}
    \min_{\FRF, \FBB} & {\left\| \mathbf{B}  - \FRF \FBB \right \|}^2_F \label{eq:HBF} \nonumber \\
    {\mathrm{s.t.}} \quad & \left|\left[\FRF \right]_{i, j} \right|=1, \forall i, j, \nonumber \\
    & {\|\FRF \FBB \|}_F^2 = P_{\mathrm{sum}}.
\end{align}
The optimization problem in \eqref{eq:HBF} has the same form as the one formulated in \cite[Eqn. (15)]{Ayach_TWC14}, with the only difference being that each column of $\mathbf{B}$ stands for one beamformer in the frequency domain. Consequently, the phase-extraction alternating minimization (PE-AltMin) algorithm from \cite{Yu2016} can be adopted to find a local optimal solution\footnote{We also tried the orthogonal matching pursuit (OMP) method proposed in \cite{Ayach_TWC14} and saw similar performance as PE-AltMin for our problem.}. 
%

\subsubsection{Partially-connected structure}
For the HBF-PC structure, the beamformer design problem is similar to \eqref{eq:HBF} with the additional constraints that ${[\mathbf{F}_{\mathrm{RF}}]}_{m,n}=0$ for $m \neq \widetilde{\mathcal{M}}_{n}$. 
We generate the analog and digital beamforming in an alternating minimization manner, 
the details of which are skipped here for brevity. 

For a transmitter with a $M=64$ antenna elements, either HBF-FC or HBF-PC structure, and similar signal parameters as Section \ref{sec_simulation_results}, the achievable array gain when emulating behavior 1 ($\theta_0 = \pi/6$, $\Delta \theta = \pi/4$) and behavior 2 ($\theta_1 = -\pi/4$, $\theta_2 = \pi/6$) are illustrated in Fig.~\ref{Fig_ant_gain_HBF}. By comparing to Fig.~\ref{Fig_ant_gain_JPTA}, we observe that HBF can realize a sharper transition of the beam pointing angle with frequency than JPTA, as also reflected in the $\mathscr{F}_{\mathrm{obj}}$ value in Fig.~\ref{Fig_HNF_impact_of_NRF}.
\begin{figure}[!htb]
\centering
\subfloat[HBF-FC, $N_{\mathrm{RF}}=22$, Behavior 1]{\includegraphics[width= 0.25\textwidth]{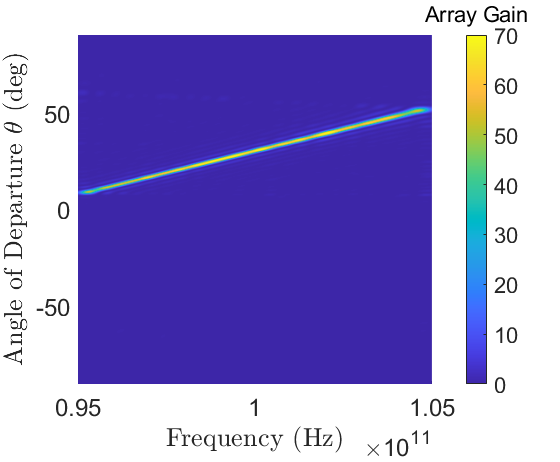} \label{Fig_behav1_FC}} 
\subfloat[HBF-FC, $N_{\mathrm{RF}}=2$, Behavior 2]{\includegraphics[width= 0.25\textwidth]{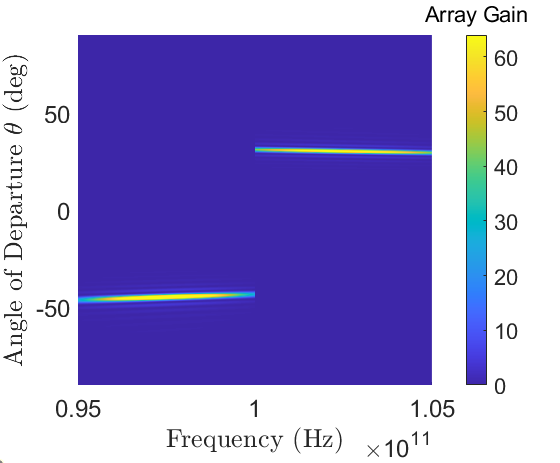} \label{Fig_behav2_FC}} \\
\subfloat[HBF-PC, $N_{\mathrm{RF}}=32$, Behavior 1]{\includegraphics[width= 0.25\textwidth]{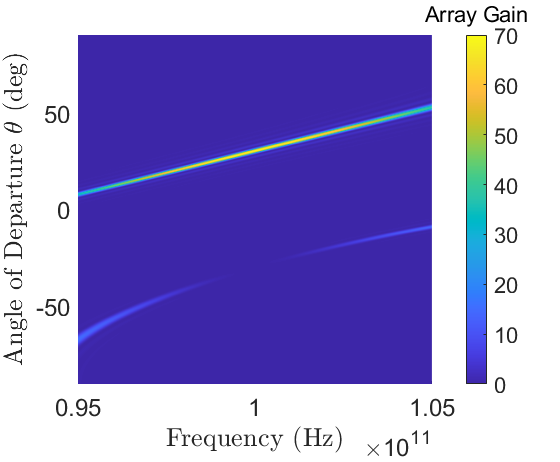} \label{Fig_behav1_PC}} 
\subfloat[HBF-PC, $N_{\mathrm{RF}}=32$, Behavior 2]{\includegraphics[width= 0.25\textwidth]{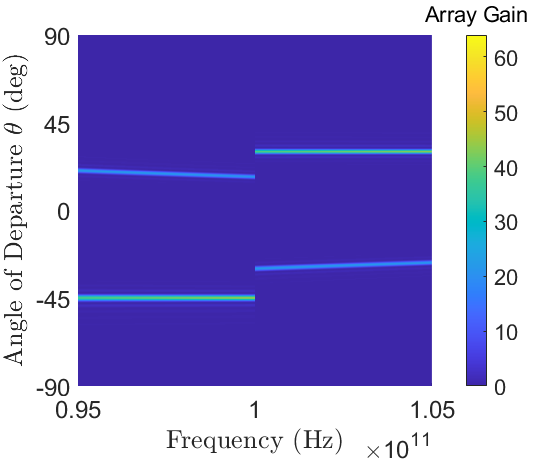} \label{Fig_behav2_PC}} \\
\caption{An illustration of the array gain achievable with HBF-FC and HBF-PC when emulating behavior 1 (with $\theta_0 = \pi/6$, $\Delta \theta = \pi/4$) and behavior 2 (with ($\theta_1 = -\pi/4$, $\theta_2 = \pi/6$)).}
\label{Fig_ant_gain_HBF}
\end{figure} 
Next, the ability of HBF to emulate behavior 1 and 2 as a function of $N_{\mathrm{RF}}$ is studied in Fig.~\ref{Fig_HNF_impact_of_NRF}. Here, for behavior 1, we illustrate two scenarios with $\theta_0 = \pi/6$, $\Delta \theta = \pi/4$ and $\theta_0 = 0$, $\Delta \theta = 2\pi/3$, respectively, while for behavior 2 we consider one scenario with $\theta_1 = -\pi/4$, $\theta_2 = \pi/6$. As a reference, we also plot the performance of JPTA with $1$ RF chain and $N=64$ TTDs. As evident from the results, for these parameters, $22$ RF chains (behavior 1) and $2$ RF chains (behavior 2), respectively are required for HBF-FC to achieve similar performance as JPTA. Similarly $32$ RF chains (behavior 1) and $32$ RF chains (behavior 2), respectively are required for HBF-PC to achieve similar performance as JPTA. With a larger number of RF chains HBF can out-perform JPTA, albeit, at a huge increase in hardware cost and energy consumption. It is worth reiterating that JPTA can be used in conjunction with multiple RF chains to give much better results, which will be explored in future work. Finally, we also observe that the HBF-FC structure can outperform the HBF-PC structure as expected. 
\begin{figure}[!htb]
\centering
\subfloat[Behavior 1]{\includegraphics[width= 0.45\textwidth]{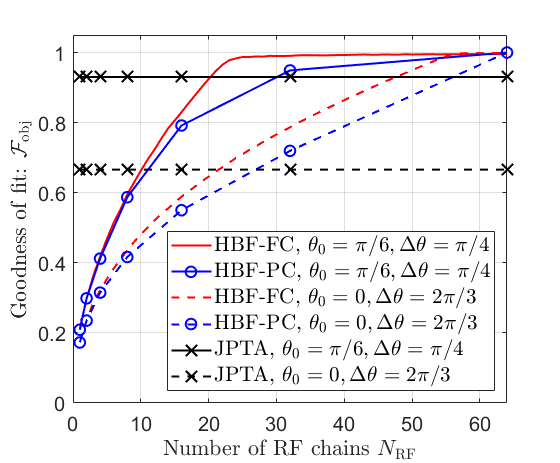} \label{Fig_behav1_HBF_f_obj}} \\ 
\subfloat[Behavior 2]{\includegraphics[width= 0.45\textwidth]{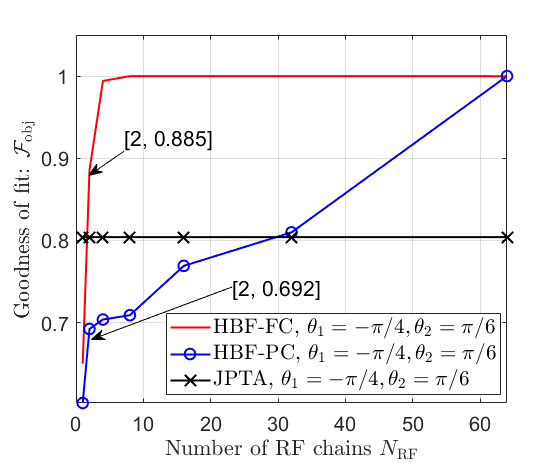} \label{Fig_behav2_HBF_f_obj}}
\caption{An illustration of the objective $\mathscr{F}_{\mathrm{obj}}$ (with $\omega_k=1$) achievable with HBF under different structures, with varying number of RF chains used.}
\label{Fig_HNF_impact_of_NRF}
\end{figure}
Based on these and other experimental observations we also have the following proposition for replicating behavior 1 with HBF:
\begin{proposition} \label{remark_beamsway_FC_PC}
To achieve behavior 1 with HBF-FC, the required minimum number of RF chains is,
\begin{align}
r \triangleq  \Bigg\lceil \frac{M}{2} \bigg| & \sin \left(\theta_0+\frac{\Delta \theta}{2} \right) \frac{f_{\lfloor \frac{K-1}{2} \rfloor}}{f_0} - \nonumber \\
& \sin \left(\theta_0 - \frac{\Delta \theta}{2} \right) \frac{f_{\lfloor \frac{1-K}{2} \rfloor}}{f_0} \bigg| \Bigg\rceil,
\end{align}
while the required minimum number of RF chains for HBF-PC is $2^{\left\lceil \log_2 \left( r \right) \right\rceil}$.
\end{proposition}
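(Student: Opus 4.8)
The plan is to recast ``minimum number of RF chains'' as a rank condition on the stacked desired-beamformer matrix $\bB$ and then estimate that rank by a Fourier/Nyquist counting argument. For the HBF-FC structure, every realizable family of per-subcarrier beamformers has the form $\bb_k = \FRF\fBBk$, so the stacked target factors as $\bB = \FRF\FBB$ with $\FRF\in\mathbb{C}^{M\times\NRF}$. Because $\FRF$ has only $\NRF$ columns, $\mathrm{rank}(\bB)\leq\NRF$; this inequality is the hard lower bound on the number of RF chains. Conversely, once $\NRF$ reaches the effective rank of $\bB$, the target can be reproduced to within negligible error (the digital factor $\FBB$ absorbs all magnitudes, and the unit-modulus analog factor can approximate a spanning basis via the PE-AltMin/OMP routines already cited). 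Thus the task reduces to counting the dominant singular values of $\bB$ for behavior~1.

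Next I would exhibit the structure of $\bB$. Writing $[\bb_k]_m\propto e^{\mathrm{j}\pi(m-1)\psi_k}$ with spatial frequency $\psi_k\triangleq\sin(\theta_0+k\Delta\theta/K)\,f_k/f_0$, each column of $\bB$ is a discrete complex exponential sampled across the $M$ antennas. As $k$ runs over $\mathcal{K}$, $\psi_k$ sweeps across the interval $[\psi_{\min},\psi_{\max}]$ with endpoints $\psi_{\min}=\sin(\theta_0-\Delta\theta/2)\,f_{\lfloor(1-K)/2\rfloor}/f_0$ and $\psi_{\max}=\sin(\theta_0+\Delta\theta/2)\,f_{\lfloor(K-1)/2\rfloor}/f_0$. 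The key step is a concentration argument: expanding in the $M$-point DFT basis, a steering vector at spatial frequency $\psi$ (note $e^{\mathrm{j}\pi(m-1)\psi}=e^{\mathrm{j}2\pi(m-1)(\psi/2)}$) concentrates on the DFT bin $p\approx M\psi/2$, the bin spacing in $\psi$ being $2/M$. Hence, up to leakage, the entire column set lies in the span of those DFT vectors whose bins fall in $[\psi_{\min},\psi_{\max}]$, and the number of such bins is $\tfrac{M}{2}|\psi_{\max}-\psi_{\min}|$; rounding up gives exactly $r$. This is the Slepian--Landau--Pollak prolate-spheroidal phenomenon: a length-$M$ window of a spatial-frequency band of width $\Delta\psi$ has $\approx\tfrac{M\Delta\psi}{2}$ dominant singular values with a sharp spectral transition, which identifies the effective rank of $\bB$, and therefore the minimum $\NRF$, with $r$.

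For HBF-PC I would exploit the block structure. Restricted to the antennas $\widetilde{\mathcal{M}}_n$ of subarray $n$, the realizable beamformer is $\mathbf{f}_{\mathrm{RF},n}\,[\fBBk]_n$: a fixed frequency-flat spatial vector scaled by a single per-subcarrier digital weight. Stacking over $k$, the sub-block of $\bB$ on rows $\widetilde{\mathcal{M}}_n$ must therefore be rank one. Applying the same DFT-counting argument to a subarray of size $\bar M=M/\NRF$, that sub-block has effective rank $\approx\tfrac{\bar M}{2}|\psi_{\max}-\psi_{\min}|=r/\NRF$, so rank-one realizability forces $r/\NRF\leq 1$, i.e.\ $\NRF\geq r$. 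Finally, since the equal-subarray partition requires $\NRF$ to divide $M=2^6$ and the architecture uses power-of-two subdivisions, the smallest admissible $\NRF$ that is at least $r$ is the least power of two exceeding $r$, namely $2^{\lceil\log_2 r\rceil}$ (using that an integer power of two is $\geq r$ iff it is $\geq\lceil r\rceil$).

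The delicate point---and the reason the statement is offered as experimentally supported---is rigorizing the effective-rank step, since the exact rank of $\bB$ is generically $\min(M,K)$ and the claim holds only relative to an appropriate beamforming-gain (approximation-error) threshold. The honest proof rests on the sharpness of the prolate-spheroidal transition, which makes $\tfrac{M}{2}|\psi_{\max}-\psi_{\min}|$ the correct cutoff up to $\mathrm{O}(\log M)$ boundary singular values; controlling those boundary terms and the mild non-uniformity of $\psi_k$ induced by the $\sin(\cdot)$ and the beam-squint factor $f_k$ is the main obstacle. The surrounding reductions---to matrix rank for FC, to per-block rank one for PC, and the power-of-two rounding---are by comparison routine.
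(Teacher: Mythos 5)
Your proposal is sound and lands on the same core reduction as the paper---the factorization $\bB=\FRF\FBB$ forces $\NRF\geq\mathrm{rank}(\bB)$---but the two arguments estimate that rank quite differently. The paper's Appendix B is more elementary and, for the necessity direction, exact: it observes that steering vectors $\mathbf{g}(\Omega_1),\mathbf{g}(\Omega_2)$ are \emph{exactly} orthogonal whenever $|\Omega_1-\Omega_2|=2k/M$ for nonzero integer $k$, and since the swept spatial frequency traverses an interval of width $|\psi_{\max}-\psi_{\min}|$, the matrix $\bB$ contains $r$ mutually orthogonal columns, so $\mathrm{rank}(\bB)\geq r$ with no leakage, boundary, or asymptotic caveats---precisely the issues you honestly flag as the ``delicate point'' of your Slepian--Landau--Pollak effective-rank step. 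What your route buys in exchange is twofold: the prolate-spheroidal dimension count also \emph{upper}-bounds the effective rank, giving a sufficiency heuristic the paper never attempts (and note your appeal to PE-AltMin is unnecessary here: the $r$ DFT columns are themselves unit-modulus, hence directly realizable as $\FRF$); and your HBF-PC argument is genuinely stronger than the paper's, which merely asserts that ``the structure limitation only increases the difficulties'' and \emph{assumes} $\NRF=2^q$, whereas you derive $\NRF\geq r$ from the per-subarray rank-one constraint (each sub-block of $\bB$ on rows $\widetilde{\mathcal{M}}_n$ must be rank one, while the same band sweep gives it effective dimension $\approx\bar{M}|\psi_{\max}-\psi_{\min}|/2=r/\NRF$, after absorbing the common phase $e^{\mathrm{j}\pi(n-1)\bar{M}\psi_k}$ into the digital weight) and then justify the power-of-two rounding via divisibility of $M=2^6$. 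If you splice the paper's orthogonal-columns observation into your necessity step in place of the effective-rank transition, your argument becomes fully rigorous for the FC lower bound while retaining the extra PC and sufficiency content; as it stands, both your proof and the paper's are offered at the same semi-empirical level of rigor, consistent with the proposition being introduced as ``based on these and other experimental observations.''
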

\begin{proof}
See Appendix \ref{appdix2}.
\end{proof}
%
%
Note that this proposition implies that the required number of RF chains to emulate these behaviors with conventional HBF will grow linearly with the antenna array size $M$, unlike with JPTA. A comparison of hardware components required for these architectures are summarized below in Table \ref{table1}.
\begin{table}[t]
\centering
\caption{Required number of hardware components to realize both behavior 1 and 2.}
\label{table1}
\begin{tabular}{| c | c | c | c |} 
 \hline
 & \# RF chains & \# Phase-shifters & \# TTDs \\ 
 \hline
 JPTA & $1$ & $M$ & ${\mathrm{O}}(M)$ \\ 
 \hline
 HBF-FC & ${\mathrm{O}}(M)$ &  $M N_{\mathrm{RF}}$ & $0$ \\
 \hline 
 HBF-PC & ${\mathrm{O}}(M)$ & $M$ & $0$ \\
 \hline
\end{tabular}
\end{table}
Note that there are numerous proposed architectures for phase-shifters and TTDs and thus a fair and comprehensive comparison of their hardware costs is beyond the scope of this paper. However, as a ball park, we also include a comparison of the power consumption, insertion loss, and on-chip area of some state-of-the-art implementations of a phase-shifter, a TTD and an RF chain (which includes the mixer, amplifier, and digital-to-analog converter etc.) in Table \ref{table2}. As observed from the results, research towards significantly reducing TTD insertion loss is required to make them as attractive analog-components as phase-shifters. 
\begin{table}[t]
\centering
\caption{Key metrics of transceiver components capable of operating at $f_0=100$ GHz and/or $W=10$ GHz.}
\label{table2}
\begin{tabular}{| p{2.6 cm} | p{0.8 cm} | p{1.5 cm} | p{0.8 cm} |} 
 \hline
Component & Power (mW) & Insertion Loss (dB) & Area ($\mathrm{mm}^2$) \\ 
 \hline
 RF chain \cite{Skrimponis2020} & 125 & -- & -- \\ 
 \hline
 Phase-shifter (Passive) \cite{Li2012_PS, Gu2021} & < 1 & 10 & 0.3 \\
 \hline 
 Phase-shifter (Active) \cite{Testa2020, Pepe2017} & 20 & 5 & 0.1 \\
 \hline
 TTD (max. delay > 50 ps) \cite{Sanggu2013, Hu2015, Dimitrios2021, Song2022} & 0-25 & 10-20 (per 100 ps delay) & 0.3-4 \\
 \hline
\end{tabular}
\end{table}

\section{Future Directions} \label{sec_future_dir}
Although this paper introduces the concept of JPTA and proposes an algorithm for realizing a desired  beam behavior, it has to be emphasized that there is a vast range of topics left to explore. For example, while this work highlights several potential use-cases of frequency-dependent beamforming in Section \ref{sec_des_beam_behave} encompassing coverage extension, capacity enhancement and beam management efficiency, such use-cases require further detailed investigation via system-level analysis. In addition, this work only explores a particular architecture for JPTA where each antenna is connected to a single phase-shifter and TTD. Numerous other multi-path or adaptive (using switches) structures are possible, as exemplified in Fig.~\ref{Fig_other_architectures}, with varying levels of hardware complexity. It is as of yet unclear what type of architecture is most suitable for achieving a certain type of desired beam behavior. Note that the proposed Algorithm \ref{Algo1} may be directly applicable to several of these architectures. 
\begin{figure}[!htb]
\centering
\subfloat[Multi-path fixed architecture]{\includegraphics[width= 0.35\textwidth]{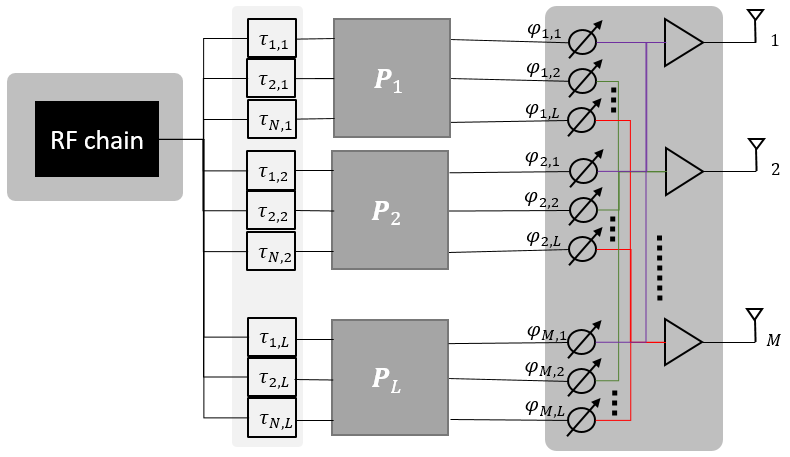} \label{Fig_arch_multipath}} \\
\subfloat[Single-path adaptive architecture]{\includegraphics[width= 0.35\textwidth]{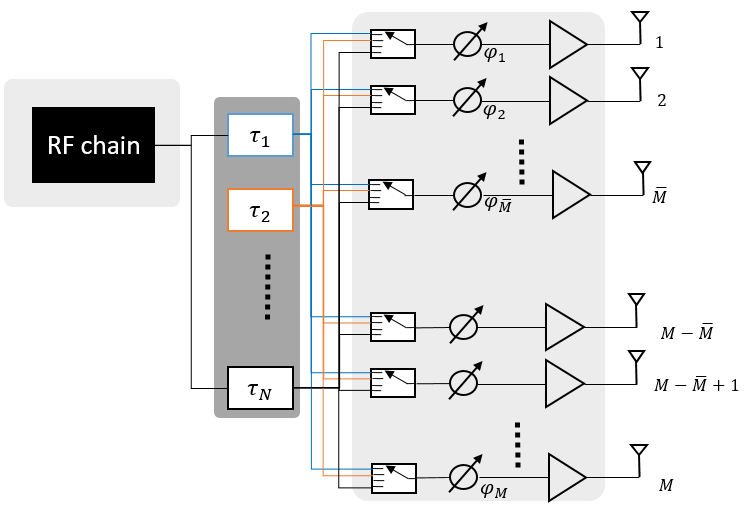} \label{Fig_arch_multipath_switch}}
\caption{An illustration of other potential architectures for JPTA that involve $L$ phase-shifters and TTDs for antenna.}
\label{Fig_other_architectures}
\end{figure}
Similarly, the present paper only considers the case with a single RF chain. In practice, the transceiver may be equipped with multiple RF chains and thus the JPTA design problem with multiple RF chains and its achievable beam behaviors is an interesting problem to explore. The TTD components may also suffer from hardware impairments, as listed in \cite{Fast_Boljanovic2021}, whose impact requires further analysis.

Further, note that the analysis in the paper assumes apriori knowledge of desired beam at each frequency i.e., $\{ \mathbf{b}_k | k \in \mathcal{K}\}$. However in some applications any solution $\{ \mathbf{b}_{\rho(k)} | k \in \mathcal{K}\}$ is also acceptable, where $\rho: \mathcal{K} \rightarrow \mathcal{K}$ is any bijection. In other words, which sub-carriers provide good coverage to each angle in the coverage area may not be important. Additionally, there may be scenarios where for each sub-carrier $k$ a desired set of coverage angles $\{\theta_{k,1},...,\theta_{k,D}\}$ can be provided instead of a desired beamformer $\mathbf{b}_k$. Generalizing Algorithm \ref{Algo1} to cater to such scenarios is also an interesting direction for further exploration. 

Finally, since the concept of frequency-dependent beamforming is new, there is no clear consensus on what frequency-dependent beam variations are desirable. In fact, apart from the behaviors discussed in Section \ref{sec_des_beam_behave}, there can be numerous other examples of desired beam shapes, such as extension of behavior 2 to 3 angles $\{\theta_1, \theta_2, \theta_3\}$, designing a wide coverage-beam with a frequency dependent beam-width to minimize co-channel interference on a sub-band etc. An illustration of the potential of JPTA to form such beams behaviors is depicted in Fig.~\ref{Fig_unorthodox_beam_behav}. It is also worth exploring what are the fundamental limitations on the achievable beam behaviors with JPTA. 
\begin{figure}[!htb]
\centering
\subfloat[Ideal, Behavior 3]{\includegraphics[width= 0.24\textwidth]{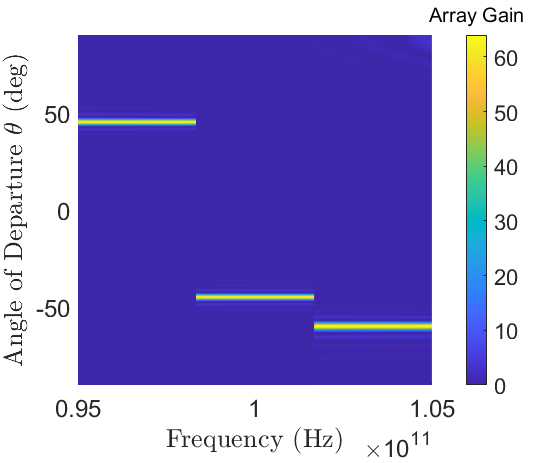} \label{Fig_behav3_ideal}}
\subfloat[JPTA, Behavior 3]{\includegraphics[width= 0.24\textwidth]{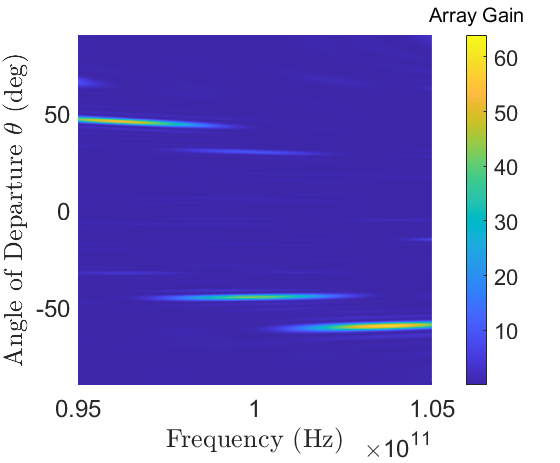} \label{Fig_behav3_algo1}} \\
\subfloat[Ideal, Behavior 4]{\includegraphics[width= 0.24\textwidth]{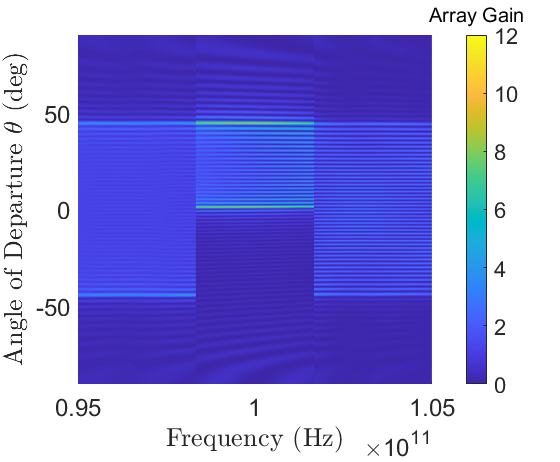} \label{Fig_behav4_ideal}}
\subfloat[JPTA, Behavior 4]{\includegraphics[width= 0.24\textwidth]{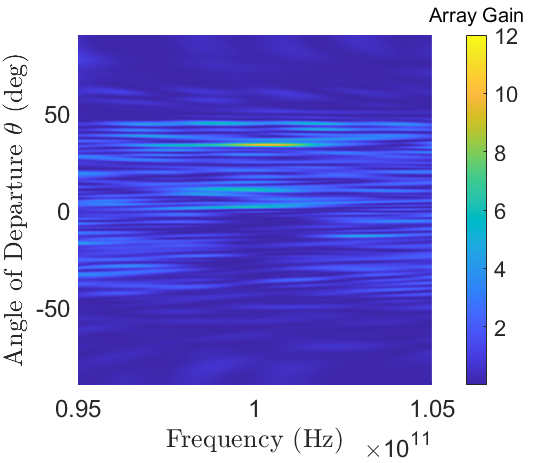} \label{Fig_behav4_algo1}}
\caption{An illustration of the ability of JPTA to form other complex beam behaviors. Here we use Algorithm \ref{Algo1} with \eqref{eqn_opt_TTD} and $\text{max-iter} = 10$.}
\label{Fig_unorthodox_beam_behav}
\end{figure}

\section{Conclusions} \label{sec_conclude}
In this work we propose a new type of hybrid architecture called joint phase-time arrays (JPTA) that can realize frequency-dependent beamforming in the analog domain. We also identify two important frequency dependent beam behaviors 1 \& 2, and listed their various use-cases. It was also shown that the JPTA beamformer design problem can be formulated as an iterative weighted least-squares minimization problem and an algorithm was proposed to solve it. Simulations have shown that JPTA can replicate beam behavior 1 more accurately than behavior 2 and the proposed algorithm can beat other baseline heuristics. We also observe that the proposed iterative algorithm typically converges within $10$ iterations. From the analysis and results, we also infer that for JPTA, the maximum delay to replicate behavior 1 grows linearly as $M/W$ while that of behavior 2 is a constant set to $3/W$. In addition, beam behavior 2 is more robust to reduction in the number of TTDs than behavior 1. We also infer that for conventional hybrid beamforming with partially connected structure, approximately ${\mathrm{O}}(M)$ RF chains are required to replicate behavior 1 and 2, respectively, while JPTA only requires one RF chain. Finally, we also conclude that JPTA can have many different architectures, different optimization objectives and can potentially realize other complex beam behaviors, which can unleash the untapped potential of this newly proposed architecture. 

\begin{appendices}
\section{} \label{appdix1}
\begin{proof}[Proof of Lemma \ref{lemma_wLS}]
Here we make the assumption that for given digital phases $\angle\boldsymbol{\alpha}$, with the optimal parameters $\tau_n^{\circ}(\angle\boldsymbol{\alpha}), \phi_m^{\circ}(\angle\boldsymbol{\alpha})$ the JPTA beamformer can realize a close approximation (modulo $2 \pi$) to the phase angles associated with the desired beam set, i.e., $\angle {[\bar{\mathbf{b}}_k]}_m$ for all $1 \leq m \leq M, k \in \mathcal{K}$. This implies that at the optimal solution we have for all $n \in \{1,...,N\}, m \in \mathcal{M}_n, k \in \mathcal{K}$:
\begin{align}
- 2 \pi f_k \tau_{n} + \phi_m & \stackrel{{\mathrm{mod}} 2}{\approx} \angle {[\bar{\mathbf{b}}_k]}_{m} - \angle \alpha_k \nonumber \\
\Rightarrow - 2 \pi f_k \tau_{n} + \phi_m & \approx \mathscr{U}\Big(\angle {[\bar{\mathbf{b}}_k]}_{m} - \angle\alpha_k \Big), \label{eqn_appdix1_0}
\end{align}
where $\mathscr{U}(\cdot)$ is the phase-unwrapping function as defined in Lemma \ref{lemma_wLS} and the second step follows from the fact that the left hand side: $- 2 \pi f_k \tau_{n} + \phi_m$ can only vary smoothly with $k$. 
Therefore, in the vicinity of the optimal solution, \eqref{eqn_opt_prob_analog_2} can be rewritten as:
\begin{flalign}
& \argmax_{\tau_n, \{\phi_m | m \in \mathcal{M}_n \}} \bigg\{ \sum_{k \in \mathcal{K}} \sum_{m \in \mathcal{M}_n} & \nonumber \\
& \qquad \omega_k \big|{[\bar{\mathbf{b}}_k]}_m \big| {\mathrm{Re}} \Big[ e^{{\mathrm{j}} \big[ \angle\alpha_k - \angle {[\bar{\mathbf{b}}_k]}_m + \phi_{m} - 2 \pi f_k \tau_{n}} \Big] \bigg\} & \nonumber \\
& \approx \argmax_{\tau_n, \{\phi_m | m \in \mathcal{M}_n \}} \bigg\{ \sum_{k \in \mathcal{K}} \sum_{m \in \mathcal{M}_n} \omega_k \big|{[\bar{\mathbf{b}}_k]}_m \big|  & \nonumber \\
& \quad \left[ 1 - \Big( \phi_{m} - 2 \pi f_k \tau_{n} - \mathscr{U}\Big(\angle {[\bar{\mathbf{b}}_k]}_{m} - \angle \alpha_k \Big)^2 \right] \bigg\}, \!\!\!\!\! & \label{eqn_appdix1_1}
\end{flalign}
where \eqref{eqn_appdix1_1} follows by taking the 2nd order Taylor expansion of the $e^{x}$ term, which is accurate near the optimal solution as per \eqref{eqn_appdix1_0}. For any $\angle\boldsymbol{\alpha}$, \eqref{eqn_appdix1_1} is a bounded wLS problem in $|\mathcal{M}_{n}|+1$ variables: $\tau_n, \{\phi_m | m \in \mathcal{M}_n\}$, weights $\omega_k|{[\bar{\mathbf{b}}_k]}_m|$ and a bound $|\tau_n| \leq \kappa/(2W)$. Due to the fact that $\tau_n$ and $\tau_n+K/W$ yield the value of \eqref{eqn_opt_TTD}, and from the concavity of the objective in \eqref{eqn_appdix1_1}, it follows that this bounded wLS can be solved by solving the unconstrained wLS problem and then applying \eqref{eqn_wLS_bounds}. Ignoring the constant terms in \eqref{eqn_appdix1_1} we obtain \eqref{eqn_lemma_wLS}. Note that although $\{\phi_m | m \in \mathcal{M}_n \}$ can also be obtained as a solution of \eqref{eqn_appdix1_1}, we still use prefer using \eqref{eqn_opt_phi} in Lemma \ref{lemma_wLS} since \eqref{eqn_opt_phi} is in closed form and doesn't depend on the Taylor approximation.
\end{proof}

\section{} \label{appdix2}
\begin{proof}[Proof of Proposition \ref{remark_beamsway_FC_PC}]
We first define a kind of $M \times 1$ vector
\begin{align}
\mathbf{g}(\Omega) \triangleq [1 \quad e^{\mathrm{j} \pi \Omega} \quad \cdots \quad e^{\mathrm{j} \pi \Omega (M-1)}]^{\rm T}.
\end{align}
It can be easily shown that when $\left|\Omega_1 - \Omega_2 \right| = \frac{2k}{M}$ where $k$ is a non-zero integer,  $\mathbf{g}(\Omega_1)^{\dagger} \mathbf{g}(\Omega_2)=0$ which means that $\mathbf{g}(\Omega_1)$ is orthogonal to $\mathbf{g}(\Omega_2)$. 

In the matrix $\mathbf{B}$, the first column is equal to $\mathbf{g} \left( \sin \left(\theta_0- \frac{\Delta \theta}{2} \right)  \frac{f_{\lfloor \frac{1-K}{2} \rfloor}}{f_c} \right)$ and the last column is equal to $\mathbf{g} \left( \sin \left(\theta_0 + \frac{\Delta \theta}{2} \right)  \frac{f_{\lfloor \frac{K-1}{2} \rfloor}}{f_c} \right)$. Therefore, there are $r$ columns in the $M \times K$ matrix $\mathbf{B}$ who are orthogonal to each other.
We assume that $K \gg M$ in this paper (e.g., $K=2048$, $M=64$). The rank of $\mathbf{B}$ is thus no less than $r$. 

On the other side, the rank of the hybrid beamforming $\FRF \FBB$ is restricted by $N_{\mathrm{RF}}$, since $K \gg M \geq N_{\mathrm{RF}}$ and $\operatorname{rank}(\mathbf{P} \mathbf{Q}) \leq \min(\operatorname{rank}(\mathbf{P}), \operatorname{rank}(\mathbf{Q}))$. To approximate well the rank-$r$ (or $\geq r$) matrix $\mathbf{B}$, $N_{\mathrm{RF}}$ has to be no less than $r$. Therefore, we obtain the lower bound of $N_{\mathrm{RF}}$ for HBF-FC in the proposition.

For the case of HBF-PC, the structure limitation only increases the difficulties of the replication, and more RF chains are therefore needed. In addition, we assume that the number of RF in HBF-PC is $2^q$ where $q\geq 1$ is an integer.

Lastly, note that when the bandwidth is small, i.e., $\frac{f_{\lfloor \frac{K-1}{2} \rfloor}}{f_0} \approx 1$ and $\frac{f_{\lfloor \frac{1-K}{2} \rfloor}}{f_0} \approx 1$, an approximation of $r$ is,
\begin{align}
r \approx \left \lceil \frac{M}{2} \left| \sin \left(\theta_0+\frac{\Delta \theta}{2} \right) - \sin \left(\theta_0-\frac{\Delta \theta}{2} \right) \right| \right \rceil.
\end{align}
\end{proof}

\end{appendices}

\section*{Acknowledgments}
The authors would like to thank the constructive inputs from Jin Yuan, Shadi Abu-Surra, and Gary Xu of Samsung Research America on the hardware implementation of JPTA and the comparison to HBF.





%

\bibliographystyle{IEEEtran}
\bibliography{references}

\begin{IEEEbiography}[{\includegraphics[width=1in,height=1.25in,clip,keepaspectratio]{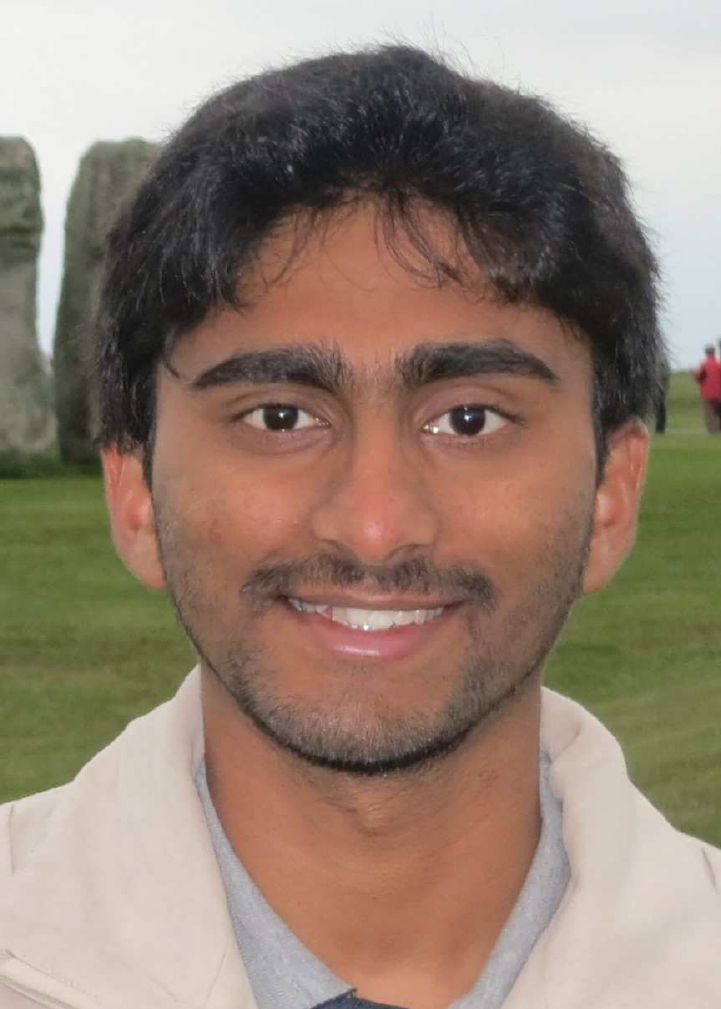}}]{Vishnu V. Ratnam} (S'10--M'19--SM'22) received the B.Tech. degree (Hons.) in electronics and electrical communication engineering from IIT Kharagpur, Kharagpur, India in 2012, where he graduated as the Salutatorian for the class of 2012. He received the Ph.D. degree in electrical engineering from University of Southern California, Los Angeles, CA, USA in 2018. He currently works as a staff research engineer in the Standards and Mobility Innovation Lab at Samsung Research America, Plano, Texas, USA. His research interests are in AI for wireless, mm-Wave and Terahertz communication, massive MIMO, wireless sensing, and resource allocation problems in multi-antenna networks.
Dr. Ratnam was the recipient of the Best Student Paper Award with the IEEE International Conference on Ubiquitous Wireless Broadband (ICUWB) in 2016, the Bigyan Sinha memorial award in 2012 and is a member of the Phi-Kappa-Phi honor society.
\end{IEEEbiography}

\begin{IEEEbiography}[{\includegraphics[width=1in,height=1.25in,clip,keepaspectratio]{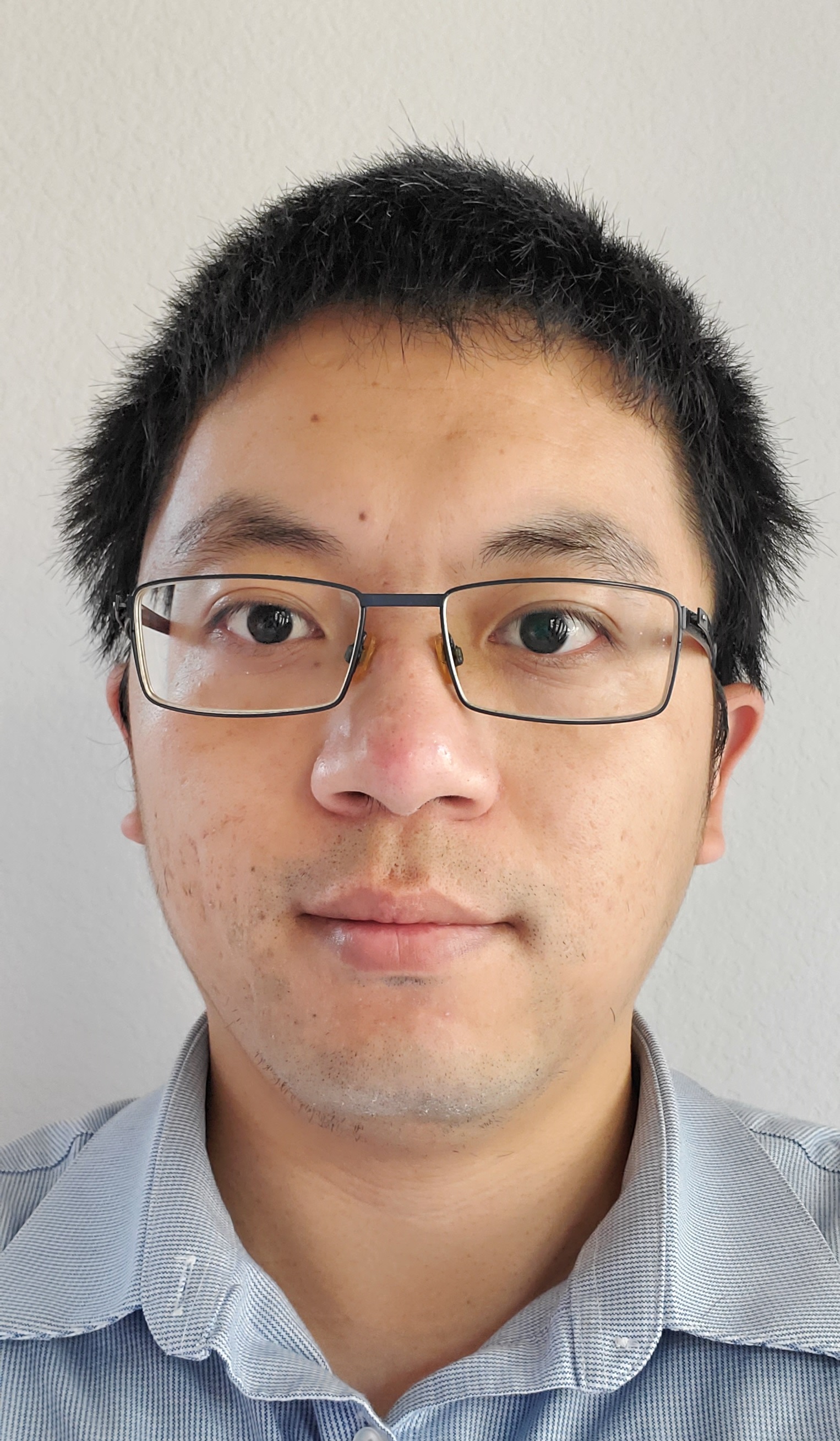}}]{Jianhua Mo} (S'12--M'17--SM'22) received the B.S. and M.S. degrees from Shanghai Jiao Tong University in 2010 and 2013, respectively, and the Ph.D. degree from The University of Texas at Austin in 2017, all in electronic engineering. 
He is currently a Staff Engineer with the Samsung Research America, Plano, TX, USA. His areas of interest includes physical layer security, MIMO communications with low resolution ADCs, and mmWave beam codebook design and beam management. 
Dr. Mo's awards and honors include Heinrich Hertz Award of 2013, Stephen O. Rice Prize of 2019, ``Best Wi-Fi Innovation Award'' by  Wireless Broadband Alliance (WBA) in 2019, Exemplary Reviewer of the IEEE Wireless Communications Letters in 2012, Exemplary Reviewer of the IEEE Communications Letters in 2015, and Finalist for Qualcomm Innovation Fellowship in 2014.
\end{IEEEbiography}

\begin{IEEEbiography}[{\includegraphics[width=1in,height=1.25in,clip,keepaspectratio]{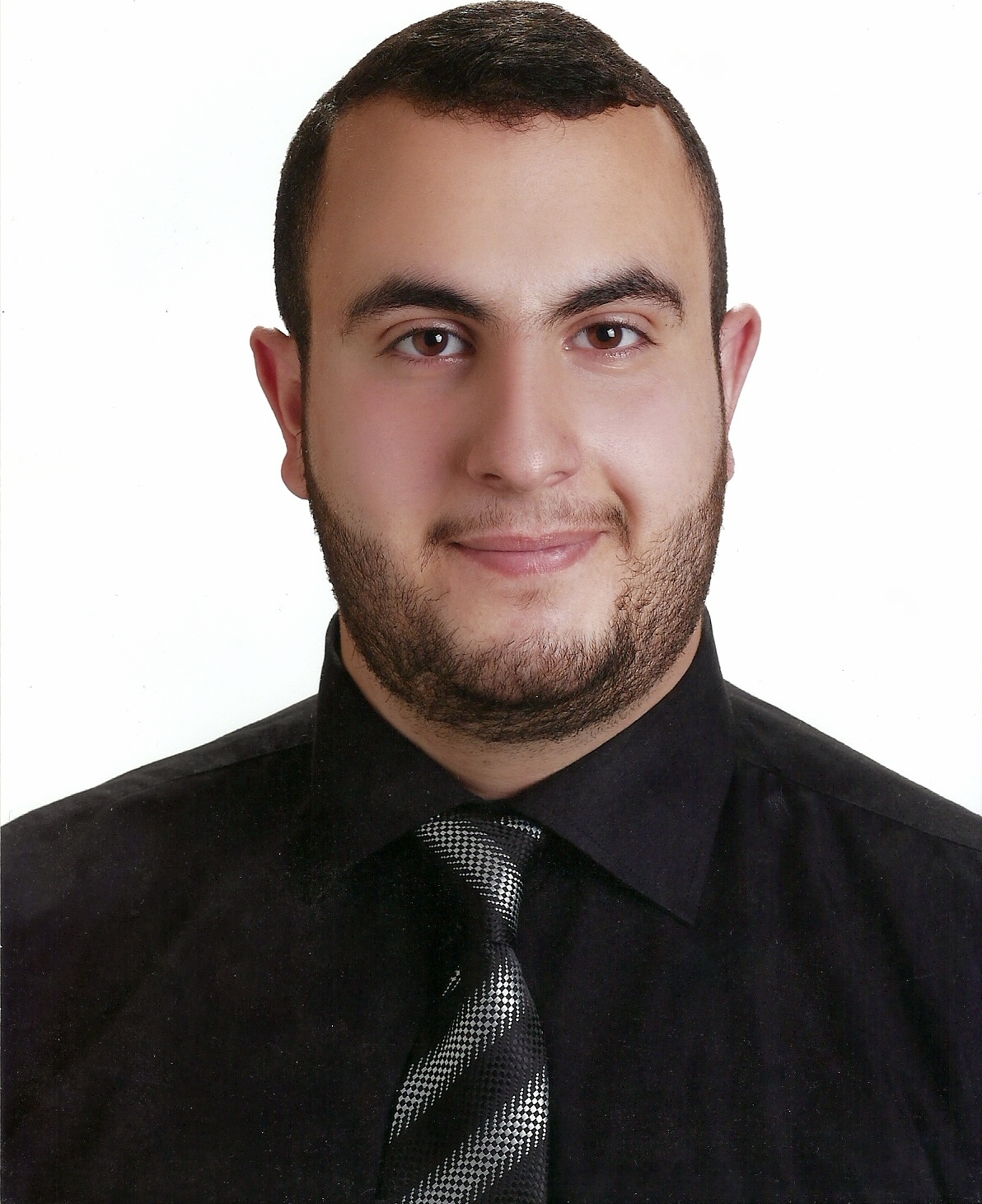}}]{Ahmad AlAmmouri} (S'11--M'21) received his B.Sc. degree from the University of Jordan, Amman, Jordan, in 2014, his M.Sc.\ degree from King Abdullah University of Science and Technology (KAUST), Thuwal, Saudi Arabia, in 2016, and his PhD degree from The University of Texas at Austin, Texas, in 2020, all in Electrical Engineering. He is currently a Senior Research Engineer with the Samsung Research America, Plano, TX, USA.  He has held summer internships at Samsung Research America, Richardson, TX, in 2017 and 2018, and was a visiting researcher at INRIA, Paris, in 2019 and 2020. He was awarded the Chateaubriand Fellowship from the French Embassy in the USA and the Professional Development Award from UT Austin, both in 2019, and the WNCG Student Leadership Award in 2020. He was recognized as an Exemplary Reviewer by the IEEE Transactions on Communications in 2017 and by IEEE Transactions on Wireless Communications in 2017 and 2018.
\end{IEEEbiography}

\begin{IEEEbiography}[{\includegraphics[width=1in,height=1.25in,clip,keepaspectratio]{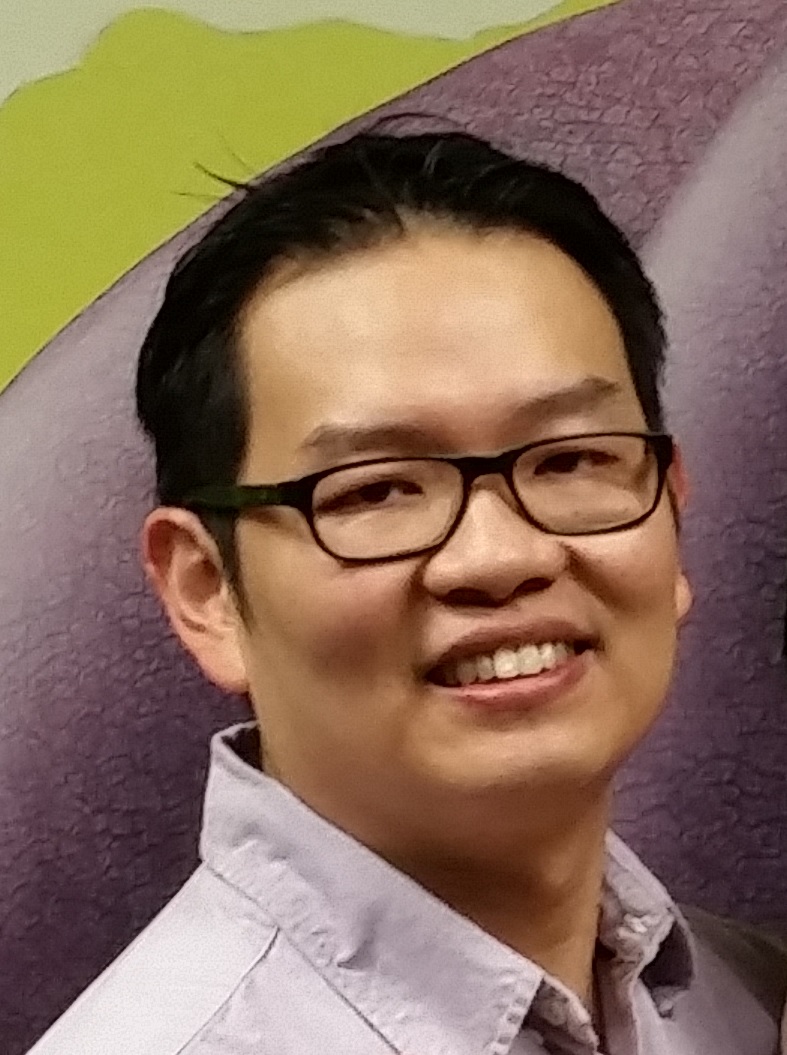}}]{Boon Loong Ng} (S'03-M'08) received the B. Eng. degree in electrical and electronic engineering and the Ph.D. degree in engineering from the University of Melbourne, Australia in 2001 and 2007, respectively.
He is currently a Senior Research Director with the Standards and Mobility Innovation (SMI) Laboratory, Samsung Research America, Plan, TX, USA. He had contributed to 3GPP RAN L1/L2 standardization of LTE, LTE-A, and 5G NR technologies from the period of 2008 to 2018. He holds over 60 USPTO-granted patents on LTE/LTE-A/5G and more than 100 patent applications globally. Since 2018, he has been leading a research and development team that develops system and algorithm design solutions for commercial 5G, Wi-Fi and UWB technologies.
\end{IEEEbiography}

\begin{IEEEbiography}[{\includegraphics[width=1in,height=1.25in,clip,keepaspectratio]{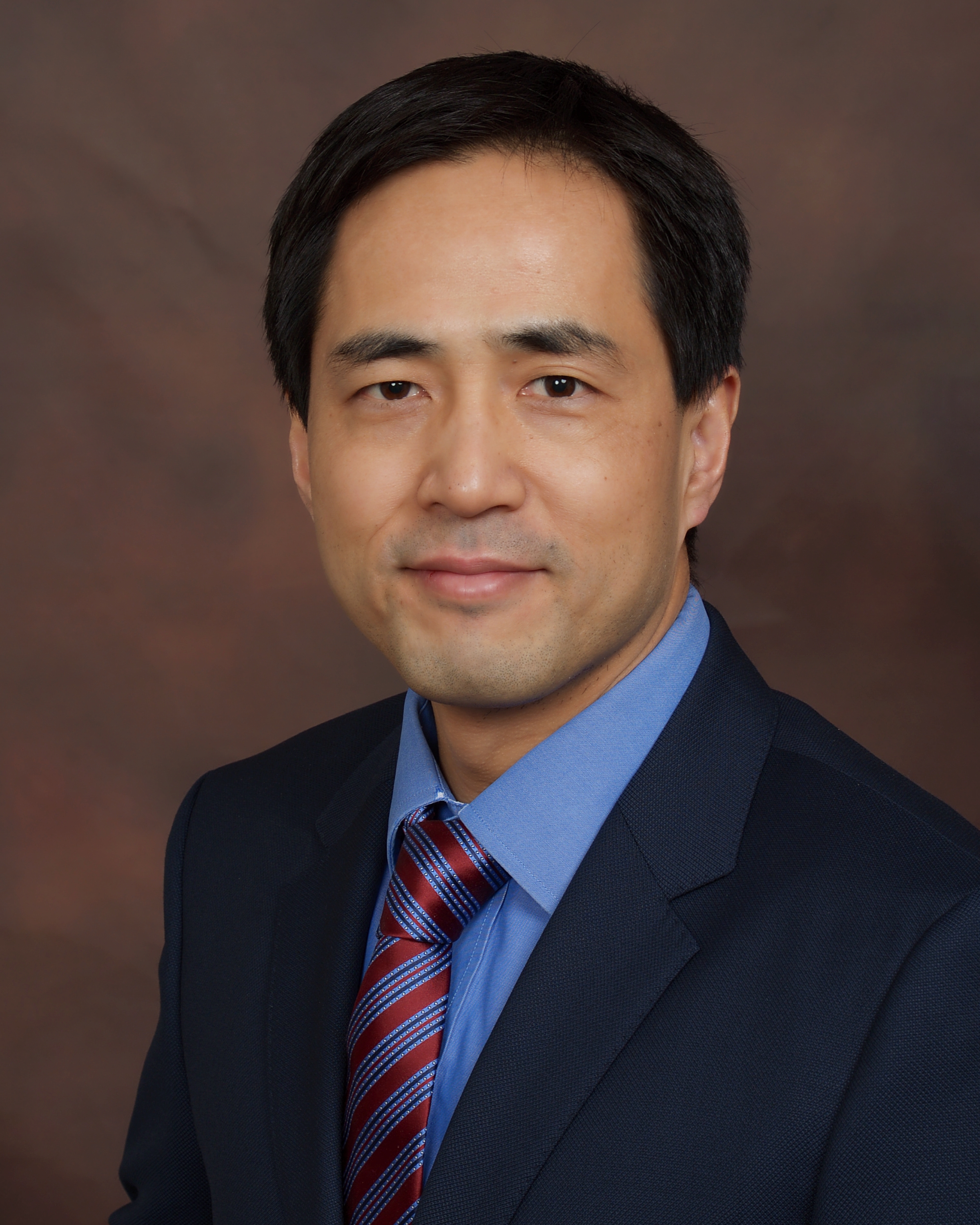}}]{Jianzhong (Charlie) Zhang} (S'00-M'03-SM'09-F'16) received the Ph.D. degree from the University of Wisconsin, Madison WI, USA. He was with the Nokia Research Center from 2001 to 2006, where he was involved in the IEEE 802.16e (WiMAX) standard and EDGE/CDMA receivers, and from 2006 to 2007, he was with Motorola, where he was involved in 3GPP HSPA standards. From 2009 to 2013, he served as the Vice Chairman of the 3GPP RAN1 group and led the development of LTE and LTE-Advanced technologies, such as 3-D channel modeling, UL-MIMO and CoMP, and Carrier Aggregation for TD-LTE. He is currently the Senior Vice President and the Head of the Standards and Mobility Innovation Laboratory, Samsung Research America, where he leads research, prototyping, and standards for 5G cellular systems and future multimedia networks. 
\end{IEEEbiography}

\begin{IEEEbiography}[{\includegraphics[width=1in,height=1.25in,clip,keepaspectratio]{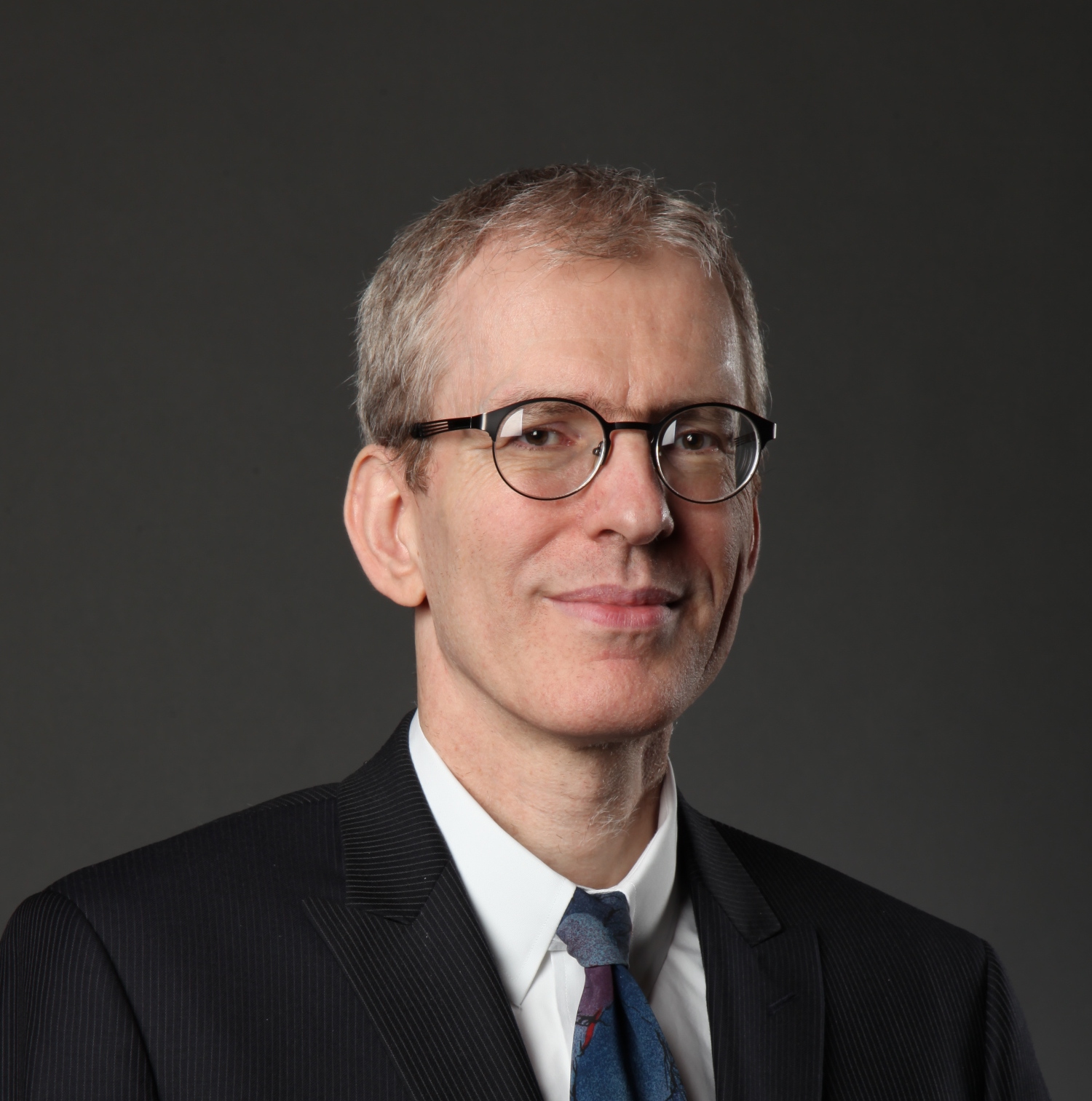}}]{Andreas F. Molisch} (S'89-M'95-SM'00-F'05) received his degrees (Dipl.Ing. 1990, PhD 1994, Habilitation 1999) from the Technical University Vienna, Austria. He spent the next 10 years in industry, at FTW, AT\&T (Bell) Laboratories, and Mitsubishi Electric Research Labs (where he rose to Chief Wireless Standards Architect). In 2009 he joined the University of Southern California (USC) in Los Angeles, CA, as Professor, and founded the Wireless Devices and Systems (WiDeS) group. In 2017, he was appointed to the Solomon Golomb – Andrew and Erna Viterbi Chair. 
His research interests revolve around wireless propagation channels, wireless systems design, and their interaction. Recently, his main interests have been wireless channel measurement and modeling for 5G and beyond 5G systems, joint communication-caching-computation, hybrid beamforming, UWB/TOA based localization, and novel modulation/multiple access methods. Overall, he has published 4 books (among them the textbook “Wireless Communications”, currently in its second edition), 21 book chapters, 280 journal papers, and 370 conference papers. He is also the inventor of 70 granted (and more than 10 pending) patents, and co-author of some 70 standards contributions. His work has been cited more than 58,000 times, his h-index is >100, and he is a Clarivate Highly Cited Researcher. 

Dr. Molisch has been an Editor of a number of journals and special issues, General Chair, Technical Program Committee Chair, or Symposium Chair of multiple international conferences, as well as Chairperson of various international standardization groups. He is a Fellow of the National Academy of Inventors, Fellow of the AAAS, Fellow of the IEEE, Fellow of the IET, an IEEE Distinguished Lecturer, and a member of the Austrian Academy of Sciences. He has received numerous awards, among them the IET Achievement Medal, the Technical Achievement Awards of IEEE Vehicular Technology Society (Evans Avant-Garde Award) and the IEEE Communications Society (Edwin Howard Armstrong Award), and the Technical Field Award of the IEEE for Communications, the Eric Sumner Award.
\end{IEEEbiography}

\end{document}